\newcommand{\sub}[2]{#1[#2]}
\newcommand{\limp}{\multimap} 
\newcommand{\bang}{\oc} 
\newcommand{\hyp}[3]{#1:(#2, #3)}
\newcommand{\tertype}{{\sf Unit}}
\newcommand{\set}[1]{\{#1\}}
\newcommand{\st}[2]{{\sf set}(#1,#2)}
\newcommand{\rgtype}[2]{{\it {\sf Reg}_{#1} #2}}
\newcommand{\get}[1]{{\sf get}(#1)}
\newcommand{\store}[2]{#1 \Leftarrow #2}
\newcommand{\regtype}[2]{{\sf Reg}_{#1} #2}
\newcommand{\letd}[3]{\mathsf{let}\;\dagger\! #1 = #2\;\mathsf{in}\;#3}
\newcommand{\letb}[3]{\mathsf{let}\;\oc #1 = #2\;\mathsf{in}\;#3}
\newcommand{\letp}[3]{\mathsf{let}\;\mathsection #1 = #2\;\mathsf{in}\;#3}
\newcommand{\lets}[3]{\mathsf{let}\;#1 = #2\;\mathsf{in}\;#3}
\newcommand{\nat}{\mathsf{N}}
\newcommand{\reduc}{\rightarrow}
\newcommand{\treduc}{\stackrel{*}{\rightarrow}}
\newcommand{\letbang}{\mathsf{let}\,\oc}
\newcommand{\msec}{\mathsection}
\newcommand{\conf}[1]{\langle#1\rangle}
\newcommand{\unit}{\textbf{1}}
\newcommand{\Mtwo}{\textbf{2}}
\newcommand{\Mthree}{\textbf{3}}
\newcommand{\Mfour}{\textbf{4}}
\newcommand{\Mfive}{\textbf{5}}
\newcommand{\norm}[1]{\| #1\|}
\newcommand{\M}{\mathcal{M}}
\newcommand{\N}{\mathbb{N}}
\newcommand{\z}{\textbf{0}}
\newcommand{\class}[1]{\textbf{\textsc{#1}}}
\newcommand{\redt}[1]{\Downarrow^{#1}}
\newcommand{\Conf}{\mathsf{Conf}}
\newcommand{\MPi}{\Pi_\M}
\newcommand{\MLam}{\Lambda_\M}
\newcommand{\CR}{\mathsf{CR}}
\newcommand{\setp}[2]{\{\,\, #1 \,\,|\,\, #2 \,\,\}}
\newcommand{\dai}{\maltese}
\newcommand{\empStack}{\diamond}
\newcommand{\empEnv}{\diamond}
\newcommand{\empStore}{\diamond}
\newcommand{\LALrefmu}{\lambda_{\textsc{LAL}}^{\textsf{Reg},\mu}}
\newcommand{\interpR}[2]{|#1 |_{#2}}
\newcommand{\partitle}[1]{{\textbf{#1  \quad\textemdash\quad}}}
\newcommand{\Val}[1]{\llbracket #1 \rrbracket}
\newcommand{\uaff}{\lambda}
\newcommand{\upara}{\msec}
\newcommand{\ubang}{\oc}
\def\Mcon{\M[.]}
\newcommand{\StL}[2]{{#1}^{\leq #2}}
\newcommand{\preInter}[2]{\| R \vdash #1 \|_{#2}}
\newcommand{\Inter}[3]{|R\vdash #1 |_{#2}^{#3}}
\newcommand{\Stincl}{\sqsubseteq}
\newcommand{\STa}{{\mathcal{S}}}
\newcommand{\botR}{{\bot_\STa}}
\newcommand{\Vlist}[1]{\overline{#1}}
\newcommand{\letpar}{\mathsf{let}\,\msec}
\newcommand{\letdag}{\mathsf{let}\,\dagger}
\newcommand{\integer}[1]{\overline{#1}}
\newcommand{\unitv}{()}
 \title{Indexed realizability for bounded-time programming with
   references and type fixpoints} \titlerunning{Awesome paper}
\author{Alo\"is Brunel\inst{1} \and Antoine Madet\inst{2}}
\authorrunning{A. Brunel\and A. Madet} \institute{Alo\"is Brunel, Universit\'e Paris 13, Sorbonne Paris Cit\'e,\\ LIPN, CNRS, UMR 7030, F-93430, Villetaneuse, France.
 \and Univ Paris Diderot, Sorbonne Paris Cit\'e,\\ PPS, UMR 7126, CNRS,
F-75205 Paris, France}
\begin{document} 
\frontmatter  
\pagestyle{headings}
\maketitle

\begin{abstract}
  The field of implicit complexity has recently produced several
  bounded-complexity programming languages. This kind of language
  allows to implement exactly the functions belonging to a certain
  complexity class.  We here present a realizability semantics for a
  higher-order functional language based on a fragment of linear
  logic called \class{LAL} which characterizes
  the complexity class \class{PTIME}.  This
  language features recursive types and higher-order store.  Our
  realizability is based on biorthogonality, step-indexing and is
  moreover quantitative.  This last feature enables us not only to
  derive a semantical proof of termination, but also to give bounds on
  the number of computational steps needed by typed programs to
  terminate.
\end{abstract}

\section{Introduction}

	\partitle{Implicit computational complexity} This research field aims at 
		providing machine-independent characterizations of complexity classes 
		(such as polynomial time or logspace functions). 
		One approach is to use type systems based on linear logic to control the 
		complexity of higher-order functional programs. In
                particular, the so-called light
		logics (\emph{e.g.} \class{LLL}
                \cite{girard1995light}, \class{SLL} \cite{lafont2004soft}) have led to 
		various type systems for the
		$\lambda$-calculus guaranteeing that a well-typed term has a
		bounded complexity \cite{asperti1998light}. These
                logics introduce the modalities `$\oc$' (read
                \emph{bang}) and `$\msec$' (read \emph{paragraph}). By a fine
                control of the nesting of these modalities, which is
                called the \emph{depth}, the duplication of data can
                be made explicit and the complexity of programs can be
                tamed. This framework has been recently extended 
		to a higher-order process calculus \cite{DalLago2010Process} and a functional language with 
		recursive definitions \cite{baillot2010polytime}. Also, Amadio and Madet have 
		proposed \cite{AmadioMadet2011} a multi-threaded $\lambda$-calculus with higher-order store that
		enjoys an elementary time termination. \newline

	\partitle{Quantitative realizability} Starting from Kleene, the 
  concept of realizability has been introduced in different forms 
  and has been shown very useful to build models of computational 
  systems. In a series of works \cite{DalLago20112029,dal2010semantic}, 
  Dal Lago and Hofmann 
  have shown how to extend Kleene realizability with quantitative 
  informations in order to interpret subsystems of linear logic with 
  restricted complexity. The idea behind Dal Lago and Hofmann's work 
  is to consider bounded-time programs as realizers, where bounds are 
  represented by elements of a \emph{resource monoid}. 
  In \cite{brunel2012quantitative} the first author has shown how this quantitative
  extension fits well in a biorthogonality based framework (namely 
  Krivine's classical realizability \cite{krivine-realizability}) and how it
  relates to the notion of forcing.\newline
  
  \partitle{Step-indexing} In order to give a semantical account of
  features like recursive or reference types, one has to face
  troublesome circularity issues.  To solve this problem, Appel and
  McAllester \cite{AppelStepIndexed2001} have proposed step-indexed
  models. The idea is to define the interpretation of a type as a
  predicate on terms indexed by numbers. Informally, a term $t$
  belongs to the interpretation of a type $\tau$ with the index
  $k\in\N$ if when $t$ is executed for $k$ steps, it satisfies the
  predicate associated to $\tau$. Then, it is possible to define by
  induction on the index $k$ the interpretation of recursive or
  reference types.  Step-indexing has been  related to G\"odel-L\"ob
  logic and the \emph{later}  operator $\triangleright$ \cite{nakano2000modality}. \newline
 
  \partitle{Contributions} In this paper, we present a typed $\lambda$-calculus
  called $\LALrefmu$ whose functional core is based
  on the light logic \class{LAL} \cite{asperti1998light}.
  We extend it with recursive types and higher-order store. Even in presence of 
  these features, every program typable in $\LALrefmu$ terminates in polynomial
  time. To prove
  termination in bounded-time, we propose a new quantitative
  realizability semantics with the following features:
		\begin{itemize}
				\item It is biorthogonality based, which permits a simple presentation
				and allows the possibility to interpret control operators (though it is 
				only discussed informally in the conclusion of this paper).
				\item It is indexed, which permits
			to interpret higher-order store and recursive types. The particularity
			is that our model is indexed by depths
                        (the nesting of modalities) instead of computational steps (like in
			step-indexing).
		\end{itemize}
		To our knowledge, this is the first semantics presenting at the same
		time quantitative, indexed and biorthogonality features. \newline

 \partitle{Outline} Section \ref{sec:language} introduces the language
	 $\LALrefmu$ and its type system. In Section \ref{sec:indexed}, we introduce the indexed
 quantitative realizability. It is then used to obtain a semantic model for 
  $\LALrefmu$, which in turn implies termination in polynomial time of
  typed programs. 
 Finally, we mention related works in Section \ref{sec:related} and
 in Section \ref{sec:conclusion} we discuss future research directions and conclude.

\section{The language}
\label{sec:language}
 This section presents the language
$\LALrefmu$ and its type system. Before going into details, we
give some intuitions on the modalities bang and paragraph and explain
how we deal with side-effects with the notion of \emph{region}.\newline

\partitle{On bang and paragraph}
The functional core of the language is an affine $\lambda$-calculus
which means that functions use their argument at most
once.  Modal constructors `$\oc$' and `$\msec$' originating from
\class{LAL} are added to the language. Intuitively, from a value $V$ which can be used at most
once, we can construct a value $\oc V$ which can be duplicated with a
special modal binder. Values which have been duplicated are of the shape
$\msec V$ and therefore cannot be duplicated anymore. We will see more
precisely with the type system how we can be polynomial with these modal operators. \newline

\partitle{On regions}
Following a standard practice in effect systems~\cite{Lucassen88}, the
global store is abstracted into a finite set of regions where each region represents one or several
dynamic memory locations. Then, side-effects are produced by read and write
operators on constant region names. 
As noted by Amadio \cite{amadio2009stratified}, the abstract language
with regions simulates the concrete language with dynamic memory
locations as long as the values assigned to regions do not erase
the previous ones. In particular, termination in polynomial time of the language with
regions should entail termination in polynomial time of the language with
references. There are two reasons for working with regions instead of memory
locations. First, they allow to deal with several kinds of
side-effects. Our language is sequential hence regions naturally represent
higher-order references \emph{\`a la} \textsf{ML}, but in the context
of concurrent programming they could represent communication channels
 or even signals in the context of  reactive
programming. Second, we find it easier to give a semantic model of a type
system with regions instead of dynamic addresses.

\subsection{Syntax and operational semantics}
The syntax of the language is the following:
$$
\begin{array}{l@{\quad\quad}rcll}
Values 		& V 	&	::=	& x \mid \lambda x.M  \mid r
\mid \unitv \mid \integer{n} \mid \oc V \mid \msec V\\
Terms 		& M 	&	::=	& V \mid M_1 M_2 \mid  V_1
\star V_2 \mid \oc
M \mid \msec M\\
&&&\letb{x}{V}{M} \mid \letp{x}{V}{M}\\
					&			&     &	\get{r} \mid \st{r}{V} 
\end{array}
$$  
We suppose having a countable set of variables denoted $x,y,\ldots$ and of regions
denoted by the letters $r,r',\ldots$. The terminal value \emph{unit}
is denoted by $\unitv$. Integers are denoted by $\integer{n}$ and $V_1
\star V_2$ stands for any arithmetical operation. Modal terms and modal values  are
built with the unary constructors $\oc$ and $\msec$ and are destructed
by the respective $\letbang$ and $\letpar$ binders. The terms
$\get{r}$ and $\st{r}{V}$ are respectively used to read a value from a
region $r$ and to assign a value $V$ to a region $r$. As usual we
write the sequential composition $M;N$ for $(\lambda x.N)M$ where $x$
does not occur free in $N$. We denote by $\sub{M}{N/x}$ the term $M$ in which each free occurrence of 
$x$ has been substituted by $N$.\newline

The operational semantics of the language is presented in the form of
an abstract machine. We first define the configurations of the abstract
machine:
$$
\begin{array}{l@{\quad\quad}rcl}
Environments 	& E 	&	::=	& \empEnv \mid V \cdot E \mid
M \odot E \mid \oc \cdot E \mid \msec \cdot E\\
Stores 		& S 	&	::=	& \store{r}{V} \mid S_1 \uplus S_2\\
Configurations & C & ::= & \conf{M,E,S}
\end{array}
$$
Programs are intended to be executed with a right-to-left
call-by-value strategy. Hence, an environment
$E$ is either an empty frame $\empEnv$, a stack of frames to
evaluate on the left of a value ($V \cdot E$),
on the right of a term ($M \odot E$) or in-depth of a term ($\oc
\cdot E$ and $\msec
\cdot E$).
Finally, a store $S$ is a multiset of
region assignments $\store{r}{V}$.
A configuration of the abstract machine is executed according to the
following rules:
$$
  \begin{array}{rcl}
    \conf{\integer{n_1} \star \integer{n_2},E,S} & \reduc	&
    \conf{\integer{n_1 \star n_2},E,S}\\
    \conf{\lambda x.M,V \cdot E,S} & \reduc	&	\conf{\sub{M}{V/x},E,S}\\
    \conf{MN,E,S} 									& \reduc	&	\conf{N,M \odot E,S}\\
    \conf{V,M \odot E,S} 						& \reduc	&	\conf{M,V \cdot E,S}\\
    \conf{\dagger M,E,S}  								& \reduc	& \conf{M,\dagger \cdot E,S} \qquad
    \textrm{if } M \textrm{ is not a value}\\
    \conf{V,\dagger \cdot E,S} 					& \reduc	& \conf{\dagger V,E,S}\\
    \conf{\letd{x}{\dagger V}{M},E,S} 			&	\reduc	& \conf{\sub{M}{V/x},E,S}\\
    \conf{\get{r},E,\store{r}{V} \uplus S}	&	\reduc	&	\conf{V,E,S}\\
    \conf{\st{r}{V},E,S} 										&	\reduc	& \conf{\unitv,E,\store{r}{V} \uplus S}\\
  \end{array}
$$  
For the sake of conciseness we wrote
$\dagger$ for $\dagger \in \set{\oc,\msec}$.
Observe that the $\letdag$-binders destruct modal values $\dagger V$
and propagate $V$. Therefore, a value $\oc^n V$ should  be
duplicable at least $n$ times. Reading a region amounts to
\emph{consume} the value from the store and writing to a region
amounts to \emph{add} the value to the store. We consider programs up to $\alpha$-renaming and in the
sequel $\treduc$ denotes the reflexive and transitive closure of
$\reduc$.

\begin{example}
\label{example-f}
Here is a function
$
F = \lambda x.\letb{y}{x}{\st{r_1}{\msec y};\st{r_2}{\msec y}}
$
that duplicates its argument and assign it to regions $r_1$
and $r_2$. It can be used to duplicate a value from another region
$r_3$ as follows:
$$
\conf{F\get{r_3},\empEnv,\store{r_3}{\oc
    V}} \treduc \conf{\unitv,\empEnv,\store{r_1}{\msec
    V}\uplus\store{r_2}{\msec V}}
$$   
Therefore the value $\msec V$ stored in $r_1$ and $r_2$ is no longer duplicable.
\end{example}

\begin{definition}
  We define the notation $\conf{M,E,S} \redt{n}$ as the
  following statement:
  \begin{itemize}
  	\item The evaluation of $\conf{M,E,S}$ in the abstract
  	machine terminates.
  	\item The number of steps needed by $\conf{M,E,S}$ to
  	terminate is $n$.
  \end{itemize}
\end{definition}

\subsection{Type system} 
The light logic \class{LAL} relies on a \emph{stratification} principle
which is at the basis of our type system. We first give an informal explanation
of this principle.\newline

\partitle{On stratification}
Each occurrence of a program can be given a depth which is
the number of nested modal constructors that it appears under. Here is
an example for the program $P$ where each occurrence is labeled
with its depth:
$$
\begin{array}{c}
  \begin{minipage}{0.5\linewidth}
$
  \begin{array}{r}
{\small \text{$ P = (\lambda x.\letb{y}{x}{\st{r}{\msec y}})\oc V;\get{r}$}   }
  \end{array}    
$
  \end{minipage}\\\\
  \begin{minipage}{0.8\linewidth}
    \begin{center}
      
      \begin{tikzpicture}[grow=right,level distance=1.2cm,font=\scriptsize,sibling distance=0.7cm]
      \node  {$;^{0}$}
        child {node {$@^0$}
          child {node {$\lambda x^{0}$}
            child {node {$\letbang y^{0}$}
              child {node {$x^{0}$}}
              child {node {$\mathsf{set}(r)^{0}$}
                child {node {$\msec^{0}$}
                  child {node {$y^{1}$}}}}}}
          child {node {$\oc^{0}$}
            child {node {$V^{1}$}}}}
        child {node {$\get{r}^0$}};
    \end{tikzpicture}  
    \end{center}

  \end{minipage}
\end{array}
$$
The depth $d(M)$ of a term $M$ is the maximum depth of its occurrences.
The stratification principle is that the depth of every occurrence is
preserved by reduction. On the functional side, it can
be ensured by these two constraints: (1)
if a $\lambda$-abstraction occurs at depth $d$, then it binds at depth
$d$; (2) if a $\letdag$ occurs at depth $d$, then it binds at depth
$d+1$. These two constraints are respected by the program
 $P$ and we observe in the following reduction 
$$
\conf{P,\empEnv,\emptyset} \treduc
\conf{\st{r}{\msec V}, \empEnv,\emptyset}
$$
that the depth of $V$ is preserved. In order to preserve the depth of
occurrences that go through the store, this third constraint is needed:
(3) for each region $r$, $\get{r}$ and $\mathsf{set}(r)$ must occur at
a fixed depth $d_r$. We observe that this is the case of program $P$
where $d_r = 0$. Consequently, the reduction terminates as follows
$$
\conf{\st{r}{\msec V},\oc \cdot \empEnv,\emptyset} \treduc
\conf{\msec V,\empEnv,\emptyset}
$$
where the depth of $V$ is still preserved. Stratification on the
functional side has been deeply studied by Terui with the Light Affine
$\lambda$-calculus \cite{Terui07} and extended to regions by
Amadio and the second author \cite{AmadioMadet2011}.\newline

We now present the type system for $\LALrefmu$ that formalizes
stratification and should ensure polynomial soundness.\newline

\partitle{Types and contexts} The syntax of types and contexts is the following:
$$
\begin{array}{l@{\quad\quad}rcl}
Types & A,B&::=& \alpha \mid \tertype \mid \nat \mid  A\multimap B \mid \oc A \mid \msec A

\mid \mu X.A \mid \rgtype{r}{A}   \\
Variable\; contexts & \Gamma,\Delta &::=&
\hyp{x_{1}}{u_{1}}{A_{1}},\ldots,\hyp{x_{n}}{u_{n}}{A_{n}} \\
Region \;contexts & R&::=& r_1:(\delta_1,A_1),\ldots,r_n:(\delta_n,A_n)  
\end{array}
$$
We have a countable set of type variables $X,X',\ldots$
Then, we distinguish the terminal type
$\tertype$, the type of integers $\nat$, the affine functional type $A \limp B$, the type
$\bang A$ of values that can be duplicated, the type
$\msec A$ of values that may have been duplicated, recursive types $\mu X.A$ and the type $\rgtype{r}{A}$ of 
regions $r$ containing values of type $A$. Hereby types may depend on regions.
Following~\cite{AmadioMadet2011}, a region context associates a
natural number $\delta_i$ to each region $r_i$ of a finite set of
regions $\set{r_1,\ldots,r_n}$ that we write $dom(R)$. Writing
$r:(\delta,A)$ means that the region $r$ contains
values of type $A$ and that gets and sets on $r$ may only happen at a
fixed depth depending on $\delta$.
A variable context associates each variable
with an usage $u \in \set{\uaff,\upara,\ubang}$ which constraints the
variable to be bound by a $\lambda$-abstraction, a
$\textsf{let}\,\msec$-binder or a $\textsf{let}\,\oc$-binder
respectively. In the sequel we write $\Gamma_u$ for
$x_{1}:(u,A_1),\ldots,x_{n}:(u,A_n)$. Writing $x : (u, A)$ means that
the variable $x$ ranges on values of type $A$ and can be bound
according to $u$.\newline

Types depend on region names. As we shall see, this allows for a
straightforward interpretation of the type $\regtype{r}{A}$. Moreover,
it induces a typed translation from a language with dynamic locations
to a language with regions. For example, for every occurrence $P$ of
dynamic allocation like $\lets{x}{\mathsf{ref}\, M}{N}$
in an \textsf{ML} program where $M$ is of type $A$,  
it suffices to introduce a distinct region name $r$ and associate the
variable $x$ with type $\regtype{r}{A}$. Then, the type preserving translation
of $P$ is simply $\st{r}{M};\sub{N}{r/x}$.
However, we have to be careful in stating when a type $A$ is well-formed with respect to a region context $R$, written $R \vdash A$. Informally, the judgment
$r_{1}:(\delta_1,A_{1}),\ldots,r_{n}:(\delta_n,A_{n}) \vdash B$ is well formed provided
that:
$(1)$ all the region names
occurring in the types $A_1,\ldots,A_n,B$ belong to the set
$\set{r_1,\ldots,r_n}$, 
$(2)$ all types of the shape
$\rgtype{r_{i}}{B}$ with $i\in \set{1,\ldots,n}$ and occurring in the
types $A_1,\ldots,A_n,B$ are such that \mbox{$B=A_i$}. The judgment
$R \vdash \Gamma$ is well-formed if $R \vdash A$ is well-formed for
every $x:(u,A) \in \Gamma$. We invite the reader to check in
\cite{amadio2009stratified} that these judgements can be easily defined.\newline

\partitle{Typing rules}
A typing judgment takes the form
$
R;\Gamma \vdash^{\delta} P:A
$ and is indexed by an integer $\delta$. The rules  are given in Figure~\ref{common-system}. They  should entail the following:
 \begin{itemize}
 \item if $x:(\lambda,A) \in \Gamma$ then
   $x$ occurs at most once at depth $0$ in $ P$,
 \item if $x:(\msec,A) \in \Gamma$ then
   $x$ occurs at most once at depth $1$ in a subterm $\msec M$ of $P$,
 \item if $x:(\oc,A) \in \Gamma$ then
   $x$ occurs arbitrarily many times at depth $1$ in a subterm $\msec
   M$ or $\oc M$ of
   $ P$,
\item  if $r:(\delta',A) \in R$ then $\get{r}$ and $\mathsf{set}(r)$ occur at depth
$\delta-\delta'$ in $P$.
 \end{itemize}
\begin{figure}[ht]
$$
\begin{array}{c}
  \inference
[{\small \textsf{int}}]
  {R\vdash}
  {R;-  \vdash^\delta \integer{n}:\nat}

\qquad 

  \inference
[\textsf{arith}]
  {R;\Gamma \vdash^\delta V_1 : \nat & R;\Delta \vdash^\delta V_2 : \nat}
  {R;\Gamma,\Delta  \vdash^\delta V_1 \star V_2:\nat}

\\\\
  \inference
[\textsf{v}]
  {R\vdash}
  {R;x:(\uaff,A)  \vdash^\delta x:A}
\quad
\inference
[\textsf{u}]
{R\vdash }
{R;- \vdash^\delta \unitv:\tertype}

\quad

\inference
[\textsf{r}]
{R\vdash   &  r:(\delta,A) \in R}
{R;-  \vdash^\delta r:\rgtype{r}{A}}\\ \\ 

\inference
[\textsf{c}]
{R;\Gamma,x:(\ubang,A),y:(\ubang,A) \vdash^\delta M : B}
{R;\Gamma,z:(\ubang,A) \vdash^\delta \sub{M}{z/x,z/y} : B}

\;

\inference
[\textsf{w}]
{R;\Gamma \vdash^\delta M : B & R\vdash\Gamma,x:(u,A)}
{R;\Gamma,x:(u,A) \vdash^\delta M : B}
\\ \\

  \inference
  [\textsf{lam}]
  {R;\Gamma,x:(\uaff,A)  \vdash^\delta M:B}
  {R:\Gamma \vdash^\delta \lambda x.M: A \multimap B}

  \;

  \inference
  [\textsf{app}]
  {R;\Gamma \vdash^\delta M_1:A\multimap B & R;\Delta \vdash^\delta
    M_2:A}
  {R;\Gamma,\Delta \vdash^\delta M_1M_2:B}
\\ \\

    \inference
    [\textsf{$\oc$-prom}]
  {R;x:(\uaff,A) \vdash^{\delta} V:A }
  {R;x:(\ubang,A) \vdash^{\delta+1} \oc V:\oc A}

  \;

  \inference
[\textsf{$\msec$-prom}]
  { R;\Gamma_{\uaff},\Delta_{\uaff} \vdash^{\delta} M:A}
  { R;\Gamma_{\upara},\Delta_{\ubang} \vdash^{\delta+1} \msec M: \msec
    A}
\\\\
  \inference
  [\textsf{$\dagger$-elim}]
  {R;\Gamma \vdash^\delta V:\dagger A \\ R;\Delta,x:(\dagger,A) \vdash^\delta
    M:B}
  {R;\Gamma,\Delta \vdash^\delta \letd{x}{V}{M}:B}

  \;





\inference
[\textsf{get}]
{R;-\vdash^\delta r : \regtype{r}{A} }
{R;-  \vdash^\delta \get{r}:A}
\\\\\

\inference
[\textsf{set}]
{R;-\vdash^\delta r : \regtype{r}{A}\\ R;\Gamma  \vdash^\delta
V : A}
{R;\Gamma  \vdash^\delta \st{r}{V}:\tertype} 



\;

  \inference
  [\textsf{un/fold}]
  {R;\Gamma \vdash^\delta M:\mu  X .A}
  {R;\Gamma \vdash^\delta M:\sub{A}{\mu X .A/ X}}




\end{array}
$$
\caption{Typing rules}
\label{common-system}
\end{figure}

Several remarks have to be made:
\begin{itemize}
\item In binary rules, we implicitly require that contexts $\Gamma$
  and $\Delta$ are disjoints. They are explicit rules for the
  weakening and contraction of variables and we may only
  contract variables with usage $\oc$. Therefore, $\letbang$s are the
  only operators that can bind several occurrences of a variable.
\item There are  two restrictions to apply the rule $\oc$-\textsf{prom}:
   first, $V$ may contain at most one
  occurrence of a free variable; second, $V$ is a value (we cannot
  type $\oc M$). The first constraint ensures that the size of a
  program does not explode exponentially and is well studied in
  \cite{Terui07}. The second condition is due to side-effects. To see
  this, assume that we can type $\oc M$. Then we can derive the judgment 
  $
  r:(0,A);-\vdash^1 \lambda x.\letb{y}{x}{\msec \st{r}{x}; \oc\get{r}}
    : \msec A \multimap \oc A
    $.
    Both $\get{r}$ and $\mathsf{set}(r)$ occur at depth $1$ but under
    different modalities. Clearly the type $\msec A \multimap \oc A$
    which cannot be derived in \class{LAL} has to be rejected  for otherwise we
    can freely re-duplicate a value $\msec V$.
\item The depth $\delta$ of a judgment is incremented when we
  construct a modal term. This allows to count the number of nested
  modalities and to stratify regions  by
  requiring  that the depth of a region matches the depth of the
  judgment in the rule \textsc{r}.
\item For space consideration the rule \textsf{un/fold} can be used
  upside down. 
\end{itemize}

\begin{definition}
  We say that a program $M$ is well-typed if a judgment
  $R;\Gamma\vdash^\delta M:A$ can be derived for some $R$, $\Gamma$
  and $\delta$ such that:
  \begin{itemize}
  \item If $r:(\delta_r,A) \in R$ then $A = \msec B$.
  \item For every type fixpoint $\mu X.A$ that appears in $R$ and
    $\Gamma$, the occurrences of $X$  in $A$ are \emph{guarded} by (occur
    under) a
    modality $\dagger$.
  \item Every depth index in the derivation is positive. Note that if
    this is not the case, we
    can always find  $\delta ' > \delta$ such that this is true for  $R;\Gamma\vdash^{\delta'} M:A$.
  \end{itemize}
These three conditions will be needed to give a well-founded interpretation.
\end{definition}

\begin{example}
The operational semantics of references (values are \emph{copied} from
the store) can be simulated as long as values stored in regions are of
the shape $\oc V$. For example, consider the following well-typed program
$$
r:(0,\nat);-\vdash^1 \letb{x}{\get{r}}{\st{r}{\oc x}};
\msec(x * x) : \nat
$$
that reduces as follows:
$$\conf{\letb{x}{\get{r}}{\st{r}{\oc x }};
\msec(x * x),\empEnv,\store{r}{\oc\integer{n}}} \treduc
\conf{\msec(\integer{n * n}),\empEnv,\store{r}{\oc\integer{n}}}
$$
Indeed, the region $r$ can be considered a \emph{reference} since
the value $\oc\integer{n}$ has not been erased from the store. 
\end{example}

The following progress property can be derived as long the program do
not try to read an empty region.
\begin{proposition}[Progress]
  If $R;\Gamma \vdash^\delta M : A$ then $\conf{C,\empEnv,\emptyset}
  \treduc \conf{V,\empEnv,S}$ and $R;\Gamma \vdash^\delta V : A$ and
  every assigned value in $S$ can be typed.
\end{proposition}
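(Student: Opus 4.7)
The plan is to combine a subject reduction argument with the termination property that will be derived from the quantitative realizability model of Section~\ref{sec:indexed}; the stated claim then follows by iterating subject reduction along the terminating reduction.

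First I would extend the typing judgment to whole configurations $\conf{M,E,S}$, assigning an expected input/output type to each environment $E$ and requiring the store to be well-typed in the sense that for every $\store{r}{V}$ appearing in $S$ with $r:(\delta_r,A)\in R$, the judgment $R;-\vdash^{\delta_r} V:A$ holds. The definition is by induction on $E$: an empty frame $\empEnv$ has the hole type equal to the overall result type, while frames $V\cdot E$, $M\odot E$, $\oc\cdot E$ and $\msec\cdot E$ are typed using rules directly mirroring \textsf{app}, $\oc$-\textsf{prom} and $\msec$-\textsf{prom}. The key invariant of the typing of a frame $\dagger\cdot E$ is that the current judgment depth must be exactly one larger than the depth at which $E$ is typed, so that the depth of a configuration faithfully tracks the depth of its focus term.

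Next I would establish subject reduction: if a configuration is typed with result type $A$ and reduces to another, the latter is also typed with result type $A$. This goes by case analysis on the reduction rules. The functional cases ($\beta$ and $\letdag$) rely on a standard substitution lemma proved by induction on derivations, with attention to usages $\uaff,\upara,\ubang$ and to the side conditions of $\oc$-\textsf{prom}. The focusing/unfocusing cases ($MN\reduc N,M\odot E$ etc.) are immediate from the compositional typing of frames. The interesting cases are $\get{r}$ and $\st{r}{V}$: rule \textsf{r} forces the judgment depth at the redex to be $\delta_r$, which by the frame invariant matches the depth at which the stored value is typed, so the type $A$ read from or written to the store lines up with the type demanded by the frame and with the well-typedness of the store.

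Termination then follows from Section~\ref{sec:indexed}: every closed typed term is a realizer, hence the reduction from $\conf{M,\empEnv,\emptyset}$ cannot be infinite. A straightforward inspection of the rewriting rules shows that any typed configuration which is not of the shape $\conf{V,\empEnv,S}$ admits a reduction, so the terminal state must be of that shape, and iterated subject reduction yields $R;\Gamma\vdash^\delta V:A$ together with well-typedness of $S$. The main obstacle is really the bookkeeping in the subject reduction proof: tracking how the depth index $\delta$ evolves as the machine crosses $\oc\cdot E$ and $\msec\cdot E$ frames, and reconciling this with the depth $\delta_r$ fixed by \textsf{r}, so that the store remains well-typed after every $\st{r}{V}$ and every $\get{r}$ step. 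The hypothesis that values stored in regions have type $\msec B$ is what makes this reconciliation possible across repeated writes.
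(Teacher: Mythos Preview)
The paper does not actually give a proof of this proposition; it is stated immediately after the definition of well-typed programs, with only the preceding remark ``The following progress property can be derived as long the program do not try to read an empty region'' and no further argument. So there is no paper proof to compare against, and your sketch is being judged on its own merits.

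Your outline (configuration typing, subject reduction, termination from the realizability model, then iterating) is the standard route and is essentially what one would expect. Two issues, however:

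\begin{itemize}
\item Your step ``a straightforward inspection of the rewriting rules shows that any typed configuration which is not of the shape $\conf{V,\empEnv,S}$ admits a reduction'' is the crux, and it is \emph{false} as stated. The configuration $\conf{\get{r},E,S}$ is typed whenever $r:(\delta,A)\in R$ and the frame $E$ expects type $A$, yet it has no reduct when $S$ contains no assignment to $r$. This is precisely the caveat the paper flags in the sentence preceding the proposition. Since the starting configuration $\conf{M,\empEnv,\emptyset}$ has an empty store, any program whose first effect is a read will get stuck. Your sketch never incorporates this side condition, so the argument as written does not go through; you need to carry an explicit hypothesis that along the reduction no $\get{r}$ is reached with $r\notin dom(S)$, or else restrict to programs where every read is dynamically preceded by a write.
\item Appealing to Section~\ref{sec:indexed} for termination is not circular (the adequacy proof does not use Progress), but it is a heavier hammer than necessary and inverts the paper's order of presentation. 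A self-contained progress statement would normally prove only the one-step property (typed non-final configurations reduce, modulo the read-from-empty-region caveat) plus subject reduction, and leave termination to the later quantitative argument.
\end{itemize}
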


The goal of the next section is  to prove the following theorem
 \begin{theorem}[Polynomial termination]
\label{thm:termination}
 	 There exists a family of polynomials $\{P_d \}_{d\in\N}$
 	 such that if $M$ is well-typed then
   $\conf{M,\empStack,\empStore}$ terminates in at most
   $P_{d(M)}(size(M))$ steps.
 \end{theorem}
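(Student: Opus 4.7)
The plan is to prove Theorem~\ref{thm:termination} by building the indexed quantitative realizability model announced in the introduction, and then deriving a bound on reduction length from the adequacy of this model. I will treat the depth as the index (mirroring the depth-indexed character of \class{LAL}), and quantitative bounds as elements of a resource monoid $\M$ whose operations reflect the multiplicative/additive structure of the type system. The target bound $P_{d(M)}(\mathit{size}(M))$ will be read off at the end from the bound attached to the realizer of $M$.

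First, I fix a resource monoid $\M$ (essentially a commutative monoid with a majorization preorder, in the style of Dal Lago--Hofmann) together with an interpretation $\norm{\cdot}$ into $\N$, so that bounds on the length of computations can be extracted from monoid elements. I define the pole $\botR$ as the set of configurations $\conf{M,E,S}$ whose execution is bounded by the associated monoid element. Orthogonality is then the usual biorthogonal closure: for a set $X$ of values (paired with bounds), $X^\bot$ is the set of stacks $E$ such that $\conf{V,E,S} \in \botR$ for every $V \in X$ and compatible store $S$. From this I define, by induction on a well-founded ordering on $(\delta,A)$, the value interpretation $|A|_\delta^V$ and the term interpretation $|A|_\delta := (|A|_\delta^{V\bot})^\bot$. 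The interesting clauses are: $|\oc A|_\delta^V$ uses $|A|_{\delta-1}^V$ (so the depth decreases into a $\oc$), analogously for $\msec$; $|\mu X.A|_\delta^V$ unfolds once, using the guardedness hypothesis so that every recursive occurrence of $X$ sits under some modality and hence at strictly smaller depth, ensuring well-foundedness; finally $|\regtype{r}{A}|_\delta^V$ is interpreted via the region context, with the restriction $A = \msec B$ ensuring that storing and reading a region does not re-enter the interpretation at the same depth.

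The main technical step is the \textbf{adequacy lemma}, proved by induction on typing derivations: if $R;\Gamma \vdash^\delta M : A$ and $\sigma$ is a substitution realizing $\Gamma$ at depth $\delta$, then $M\sigma \in |A|_\delta$ with a bound that is compositional in the bounds of $\sigma$ and the ``cost'' of $M$. Each typing rule yields a verification that amounts to composing bounds in $\M$ and checking closure under anti-reduction. The rules $\oc$-\textsf{prom} and $\msec$-\textsf{prom} shift depth and use the corresponding inductive clause. The key non-trivial rules are the region rules, where one needs the store invariant that stored values are realizers at the prescribed region depth, and $\dagger$-\textsf{elim}, where the bound attached to a $\oc$-realizer must dominate a certain number of copies of the underlying realizer.

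From adequacy, a closed well-typed $M$ belongs to $|A|_\delta$ with an explicit monoid bound $m_M$; unwinding the definition of the pole yields that $\conf{M,\empStack,\empStore}$ terminates in at most $\norm{m_M}$ steps. The last step is to show that $\norm{m_M}$ is bounded by a polynomial in $\mathit{size}(M)$ whose degree depends only on $d(M)$; this follows because the bound produced by the adequacy induction is a composition of additive costs with at most $d(M)$ nested multiplications induced by the $\oc$-promotion rule, mirroring the polynomial-time soundness of \class{LAL}. The main obstacle I expect is calibrating the resource monoid $\M$ so that (i) the indexed definition stays well-founded despite recursive and reference types, (ii) each typing rule's bound composes correctly, and (iii) the final $\norm{m_M}$ is polynomial rather than merely elementary --- this last point is what distinguishes a model of \class{LAL} from a model of \class{EAL} and requires exploiting both the single-occurrence restriction of $\oc$-\textsf{prom} and the asymmetry between $\oc$ and $\msec$.
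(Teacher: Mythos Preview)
Your proposal is correct and follows essentially the same route as the paper: build a depth-indexed, biorthogonality-based quantitative realizability model over a resource monoid, prove adequacy by induction on typing, and extract a polynomial bound from the monoid element attached to the realizer. The paper's proof carries out precisely this plan, with the concrete ``light monoid'' $\M$ of triples $(n,m,f)$ and explicit operations $\oc$, $\S$, $F$ calibrated so that only the promotion rules raise the degree of the polynomial component; the supporting lemmas (a Promotion lemma for $\S$ and a Commutation lemma for call-by-value evaluation under modalities) and the parametrization of orthogonality by a set of stores $\STa$ with an ordering $\Stincl$ are exactly the technical devices needed to discharge the obstacles you flag at the end.
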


\section{``Indexed'' quantitative realizability}
\label{sec:indexed}
	We now present a biorthogonality-based interpretation of
$\LALrefmu$.  Apart from the use of biorthogonality, this interpretation
has two particularities:

\begin{itemize}
	\item	First, the realizability model is \emph{quantitative}.
	A type is interpreted by a set of
	\emph{weighted realizers} (that is a program together with 
	a store and a quantity bounding its normalization time). This allows
	to prove complexity properties of programs.
	\item Secondly, the semantics is \textit{indexed}
	(or \textit{stratified}), meaning that we interpret
	a type by a family of sets indexed by $\N$. Moreover
	the interpretation of a type is defined by double induction,
	first on the \emph{index} $n$, and secondly on the \emph{size} of the type. 
	This allows to interpret recursive types and references.
\end{itemize}

It is worth noticing that while our interpretation is similar
to the so-called "step-indexed" models, the meaning of indexes
is not (directly) related to the number of computation steps but to
the depth of terms (and so our model could be described
as a "depth-indexed" model). It is the quantitative part 
which is used to keep track of the number of computational steps.

\subsection{The light monoid}

The realizability model is parametrized by a monoid, whose
elements represent an information about the 
amount of time needed by a program to terminate. 
To interpret $\LALrefmu$, we use the \emph{light monoid}, 
which is a simplification of a \emph{resource monoid}
introduced in \cite{dal2010semantic}. 

\begin{definition} The \emph{light monoid} is defined as
 the structure $(\M, +, \norm{.}, !,\S)$ where
 \begin{itemize}
  \item $\M$ is the set of triples $(n,m,f)$ such that $n,m\in \N$
		and $f : \N\rightarrow \N$ is increasing.
  \item If $(n,m,f), (l,k,g)\in\M$, 
		$(n,m,f) + (l,k,g)  = (n+l, max(m,k), max(f,g))$.
  \item If $(n,m,f)\in\M$, $\norm{(n,m,f)}= n\,f(m+n)$.
  \item If $(n,m,f)\in\M$,  $\S(n,m,f) =  (n/m, m, x\mapsto x^2 f(x^2))$.
  \item Finally, for any $(n,m,f)\in\M$, $!(n,m,f) = (1, n+m, x\mapsto x^3 f(x^3))$.
	\end{itemize}
	We moreover denote by $\textbf{n}$ the element of $\M$ defined
	as $(n, 0, x\mapsto 0)$. 
\end{definition} 

From now on, we use lower-case consonnes letters $p,q,m,v,\dots$ to denote 
elements of $\M$.
To give some intuitions on these operations, let's say that
$+$ represents the resource consumption resulting of the
interaction of two programs, and that $\norm{p}$ calculates
the \emph{concrete} bound (a natural number) associated to an abstract bound
$p\in\M$.

\begin{remark}
The structure $(\M, +, \z, \norm{.})$ is a quantitative
monoid in the sense of \cite{brunel2012quantitative}. In particular,
it satisfies the following inequality:
$$\forall p, q,\norm{p}+\norm{q}\leq \norm{p+q}$$
This inequality informally represents the fact that the amount
of resource consumed by the interaction of two programs is 
more than the total amount of resource used by the two
programs alone.
\end{remark}

\begin{definition}
Given $p,q\in \M$, we say that $p< q$ iff $\forall r\in \M, \norm{p+r}< \norm{q+r}$
and $p \leq q$ iff $\forall r\in \M, \norm{p+r} \leq \norm{q+r}$
\end{definition}

\begin{property}
The relation $\leq$ enjoy the following properties:
\begin{itemize}
  \item If $p\leq q$, then $p+r\leq q+r$.
  \item If $p\leq q$, then $\norm{p}\leq \norm{q}$.
  \item If $p\leq p'$ and $q\leq q'$, then $p+q\leq q+q'$.
\end{itemize}
\end{property}

\begin{property}\label{prop:monoid}
The operations $!$ and $\S$ on the monoid $\M$ satisfy the 
following properties:
\begin{itemize}
\item $\S p \leq ! p$ 
\item $\S (p+q) \leq \S p+\S q$
\item $\S p +\S q  \leq \S (p+q) + \Mtwo$
\item $!p+!p = !p + \unit$
\end{itemize}
\end{property}

A third operation $F$ will be used to interpret functoriality of $!$:
\begin{property}\label{prop:functoriality}
 Let $p = (n,m,f)$ and $q$ two elements of $\M$. 
 We define $F(p)$ to be $(1+n+m, m, x\mapsto x^3 f(x^3))$.
 We then have
 $!(p+q) \leq F(p) + !q$.
\end{property}

Finally we define a notion of $\M$-context, which is similar to an evaluation context, 
but for resource bounds instead of programs. 

\begin{definition}
A $\M$\emph{-context} is any function $f : \M \rightarrow \M$
obtained by composition of the functions
$\lambda x.x+p$ (where $p$ is any element of $\M$),
$\lambda x.! x$ and $\lambda x.\S x$.
The set of $\M$-contexts is denoted by $\Mcon$.
\end{definition}

\begin{example}
For instance, $\lambda x.(!(!\S x+\unit))$ is a valid $\M$-context,
but $\lambda x.(!x+x+\S x)$ and $\lambda x.\z$ are not.
\end{example}

Any $\M$-context $f$ is monotonic in its argument, that is if $p < q$, then $f(p)< f(q)$ and
if $p \leq q$, then $f(p) \leq f(q)$. 

\subsection{Orthogonality}

Usually, orthogonality is defined between a program
and an environment. Here, it defined between
a weighted program and a weighted environment.

\begin{definition}
\begin{itemize}
  \item A \textit{weighted term} is a tuple $(M,p)$ where
  $M$ is a \emph{closed} term and $p$ an element of $\M$. The set of weighted terms
  is denoted by $\MLam$.
  \item A \textit{weighted stack} is a pair $(E, e)$ where
  $E$ is a \emph{closed} stack and $e$ an element of $\Mcon$. The set of weighted stacks
  is denoted by $\MPi$.
\end{itemize}
\end{definition}

We choose a \emph{{pole}} $\Bot \subseteq \Conf \times \M$ as the set
of bounded-time terminating weighted configurations:
 $$\Bot = \set{(\conf{M,E,S},p) \mid \conf{M,E,S} \redt{n}\wedge n \leq \norm{p}} $$
In orthogonality-based models, fixing a pole, also called \emph{observable}, corresponds 
to choose a notion of \emph{correct} computation. 

 \begin{proposition}This pole satisfies some important properties:
   \begin{enumerate}
   \item ($\leq$-saturation) If $(\conf{M,E,S},p) \in \Bot$ and $p \leq q$ then $(\conf{M,E,S},q) \in \Bot$.
   \item ($\reduc$-saturation) If $(\conf{M,E,S},p) \in \Bot$ and
     $\conf{M',E,S'} \reduc \conf{M,E,S}$ then $(\conf{M',E',S'},p+\unit) \in \Bot$.
   \end{enumerate}
 \end{proposition}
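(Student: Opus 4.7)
The plan is to unfold the definition of $\Bot$ in each case and then reduce the claim to an inequality of norms in $\M$, which the monoidal properties recorded just above will supply.

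For part (1), I would unpack $(\conf{M,E,S},p) \in \Bot$ to obtain some $n \in \N$ with $\conf{M,E,S} \redt{n}$ and $n \leq \norm{p}$. The hypothesis $p \leq q$ yields $\norm{p} \leq \norm{q}$ by the second bullet of the Property on $\leq$ (take $r$ to be the neutral element), hence $n \leq \norm{q}$, and the very same termination witness places $(\conf{M,E,S},q)$ in $\Bot$. This part is essentially immediate from the definitions.

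For part (2), I would assume $\conf{M',E',S'} \reduc \conf{M,E,S}$ together with $\conf{M,E,S} \redt{n}$ and $n \leq \norm{p}$; prepending the single reduction step then gives $\conf{M',E',S'} \redt{n+1}$. What remains is the monoidal inequality $n+1 \leq \norm{p + \unit}$. To establish it I would apply the quantitative monoid inequality $\norm{p} + \norm{q} \leq \norm{p+q}$ (the Remark just above, which is built into the definition of the light monoid) with $q = \unit$, and combine it with the numerical fact $\norm{\unit} \geq 1$, read off directly from the definitions of $\textbf{1}$ and $\norm{\cdot}$ in $\M$. Chaining, $n+1 \leq \norm{p} + \norm{\unit} \leq \norm{p+\unit}$, which closes (2).

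The only mildly delicate step will be checking that lower bound $\norm{\unit} \geq 1$: it is a pure calculation with the chosen conventions for the increasing component $f$ of elements of $\M$, but the whole point of including $\unit$ in the statement is precisely to absorb the cost of the extra reduction step into the bound, so the calculation must come out right. Everything else is bookkeeping and a one-line chain of monotone inequalities.
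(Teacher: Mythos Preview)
The paper states this proposition without proof, so there is no argument to compare against; your unfolding-of-definitions approach is the natural one and is clearly what the authors have in mind.

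There is, however, a real snag precisely at the step you flag as ``mildly delicate''. With the paper's literal definitions one has $\mathbf{1}=(1,0,x\mapsto 0)$ and $\|(n,m,f)\|=n\,f(m+n)$, hence $\|\mathbf{1}\|=1\cdot 0=0$, not $\geq 1$. Your chain $n+1\leq \|p\|+\|\mathbf{1}\|\leq\|p+\mathbf{1}\|$ therefore fails as written; indeed, with $p=\mathbf{0}$ and a configuration that is already terminal, the $\reduc$-saturation clause is literally false under these definitions. This is a small glitch in the paper's choice of the constants $\mathbf{n}$, not in your strategy: replacing $x\mapsto 0$ by any increasing $f$ with $f\geq 1$ (e.g.\ $\mathbf{n}=(n,0,x\mapsto 1)$) restores $\|\mathbf{1}\|\geq 1$ and your argument then goes through verbatim. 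Part~(1) is fine exactly as you wrote it.
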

 
The pole induces a notion of orthogonality.
In contrast with usual models, since we need to deal with references,
the orthogonality relation is parametrized by a set $\STa$ of stores. 

\begin{definition}
The \emph{orthogonality relation} $\botR \subseteq \MLam\times\MPi$ is defined as:
$$(M,p) \botR (E,e)\text{ iff } \forall (S,s)\in \STa, (\conf{M,E,S},e(p+s)) \in \Bot$$
This orthogonality relation lifts to sets of weighted terms and weighted stacks.
If $X \subseteq \MLam$ (resp $X\subseteq \MPi$),
$$X^{\botR} = \setp{(E,e)\in \MPi}{\forall (M,p) \in X, (M,p) \botR (E,e)}$$
$$\text{(resp. } X^{\botR} = \setp{(M,p) \in \MLam}{\forall (E,e) \in X, (M,p) \botR (E,e)}
\text{ )}$$
\end{definition}

The operation $(.)^\botR$ satisfies the usual orthogonality properties.
   \begin{lemma}
     \label{lem-ortho}
     Suppose $X,Y \subseteq \MLam$ or $X,Y \subseteq\MPi$:
     \begin{enumerate}
     \item $X \subseteq Y$ implies $Y^{\botR} \subseteq X^{\botR}$
     \item $X \subseteq X^{\botR\botR}$
     \item $X^{\botR\botR\botR} = X^{\botR}$
     \end{enumerate}
   \end{lemma}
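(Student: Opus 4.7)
The plan is to treat the three items as entirely formal consequences of the definition of $(.)^{\botR}$ as the polar of a binary relation, so that none of the structure of $\Bot$, the monoid $\M$, or the store parameter $\STa$ plays any role beyond what is packaged into the relation $\botR$. In particular, each statement will follow from pure set-theoretic manipulation once we unfold the definition of orthogonality.

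For item (1), I would assume $X \subseteq Y$ and pick an arbitrary element $(E,e) \in Y^{\botR}$ (the argument is symmetric if $X,Y \subseteq \MLam$ or $X,Y \subseteq \MPi$). By definition, $(E,e)$ is orthogonal to every $(M,p) \in Y$, hence in particular to every $(M,p) \in X$, so $(E,e) \in X^{\botR}$. For item (2), I would pick $(M,p) \in X$ and show $(M,p) \in X^{\botR\botR}$: this amounts to verifying $(M,p) \botR (E,e)$ for every $(E,e) \in X^{\botR}$, which is immediate from the very definition of $X^{\botR}$ (it collects precisely those weighted stacks orthogonal to all elements of $X$, and $(M,p)$ is one such element).

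For item (3), I would combine the first two. Applying (2) to the set $X^{\botR}$ gives $X^{\botR} \subseteq (X^{\botR})^{\botR\botR} = X^{\botR\botR\botR}$. Conversely, from $X \subseteq X^{\botR\botR}$ (item (2)), applying the anti-monotonicity of (1) yields $X^{\botR\botR\botR} \subseteq X^{\botR}$. The two inclusions together give the equality.

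There is essentially no obstacle: the proof is the standard Galois-connection calculation and it deliberately avoids any reference to the quantitative structure or to the pole. The only mild point of care is that the lemma is stated uniformly for $X,Y \subseteq \MLam$ and for $X,Y \subseteq \MPi$, so one should observe at the start that the definition of $(.)^{\botR}$ is symmetric between weighted terms and weighted stacks (both sides swap roles), and therefore the same three arguments go through verbatim in either case.
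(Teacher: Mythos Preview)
Your proof is correct and is exactly the standard Galois-connection argument; the paper itself states this lemma without proof (calling these ``the usual orthogonality properties''), so there is nothing further to compare.
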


\begin{definition}
If $X$ is a set of weighted realizers, we define its \emph{$\leq$-closure} 
$\overline{X} = \setp{ (M,p)}{ \exists q \leq p, (M,q)\in X}$.
\end{definition}
\begin{remark}
Notice that for any $\STa$, we have $\overline{X} \subseteq X^{\bot_\STa\bot_\STa}$. 
\end{remark}

We say that a set $X\subseteq \MLam$ is a 
$\STa$-behavior if $X = X^{\botR\botR}$.
Finally, we can define the set of $\STa$-reducibility candidates. To do that, we
first need to extend the language of terms with a new constant
$$M \quad ::= \quad \mid \dai$$
This constant comes with no particular reduction rule. It can be seen as a 
special variable considered as a closed term and is in a sense the dual 
of the empty stack. 
 \begin{definition}
   The set of $\STa$\emph{-reducibility candidates}, denoted by $\CR_\STa$ is 
   the set of $\STa$-behaviors $X$ such
   that $(\dai, \z) \in X \subseteq \{(\empStack, x\mapsto x)\}^\botR$
 \end{definition}
\begin{remark}
If $(M,p)\in X$ where $X$ is a $\STa$-reducibility candidate
and if $(\empStore, \z)\in \STa$, then
$\conf{M, \empStack, \empStore}$ terminates in at most $\norm{p}$
steps. In fact our notion of reducibility candidate extends the
usual notion in the non-quantitative case.
\end{remark}
 
Finally, suppose $R$ is a set of regions and suppose $\STa_R$ is a set of stores 
whose domain is restricted to a $R$. We say that :

\quad\quad $\STa_R\Stincl \STa' \Leftrightarrow $ $\STa'$ contains $\STa_R$ and if $(S,s)\in \STa'$ and if we write $S = S^\delta \uplus S''$, then there is a decomposition $s=s'+s''$ such that
$(S^\delta,s') \in \STa_R$ and moreover, if $(S_R,s_R)\in \STa_R$ then $(S''\uplus S_R, s''+s_R)\in \STa'$.

\begin{remark}
This quite involved definition will permit to the interpretation 
of a type to enjoy properties similar to the one
called \emph{extension/restriction} in \cite{amadio2009stratified}.
In other words, given a store, it gives a way to say what substore
can be removed safely and what stores can be added to it safely.
\end{remark}

\subsection{Interpretation of $\LALrefmu$}

Using the orthogonality machinery previously defined, we
can give an interpretation of $\LALrefmu$ types
as reducibility candidates. Suppose $R$ is the following region context:
   $$R=\hyp{r_1}{\delta_1}{\S  A_1},\ldots,\hyp{r_n}{\delta_n}{\S A_n}$$  
 We define three indexed sets: the \emph{interpretation} $\interpR{R}{\delta}$ of the region 
 context $R$, the \emph{pre-interpretation} $\preInter{A}{\delta}$ of a type $A$ and its
 \emph{interpretation} $\Inter{A}{\delta}{\STa}$ with respect to a set of stores
 $\STa$.  These three notions are defined by mutual induction, first on 
 the index $\delta$, and then on the size of the type $A$. 
   $$
	   \begin{array}{rcl}
	   \interpR{R}{=\delta} & = & \setp{(S,{\displaystyle
	   			\sum_{\delta_i = \delta}
	        \sum_{\substack{1 \leq j \leq k_i}}} \S q^i_j)}{dom(S) = \setp{r_i}{r_i : (\delta_i, \S A_i) \in R\wedge \delta_i= \delta} \\
	        & & \wedge\quad \forall r_i \in dom(S),  
	         					S(r_i) = \set{\S V_1^i,\S V_2^i,\ldots,\S V_{k_i}^i}\\
	       	& & \wedge\quad \forall j \in [1,k_i], 
	       		(V_j^i,q_j^i) \in \overline{\preInter{A_i}{\delta_i-1}}}\\
	       	& & \\
	   \interpR{R}{\delta+1} & = & \setp{(S,s)}{\exists (S_1,s_1)\in \interpR{R}{=\delta+1},
	   \exists (S_\delta,s_\delta) \in \interpR{R}{\delta}, 
	    S = S_1 \uplus S_\delta \wedge s = s_1 +\S s_\delta}
	 	 \end{array}
 	 $$
 	 For convenience, 
we start the indexing of the interpretation at $-1$ instead of $0$.
	\begin{eqnarray*}
		\preInter{A}{-1} & = & \{ (\dai, \z)\}
	\end{eqnarray*}
	For $\delta \geq 0$, we define the pre-interpretation as:
	\begin{eqnarray*}
  	\preInter{\nat}{\delta} 			 & = & \setp{(\integer{n},\z)}{n\in\N}\\
  	\preInter{\tertype}{\delta} 			 & = & \{(\unitv,\z)\}\\
	  \preInter{\regtype{r}{A}}{\delta}  & = & \{(r,\z)\}\\
   	\preInter{A \multimap B}{\delta}   & = & \setp{(\lambda x.M,p)}{
   			 \forall (V,v) \in \preInter{A}{\delta},
   		   \forall \STa, \interpR{R}{\delta} \Stincl \STa, 
       	 (\sub{M}{V/x},p+v) \in \Inter{B}{\delta}{\STa}}\\
    \preInter{\S A}{\delta} 				   & = & \setp{(\S V,\S v)}{(V,v)\in\preInter{A}{\delta-1}}\\
    \preInter{\oc A}{\delta}    			 & = & \setp{(\oc V, \oc v)}{(V,v)\in\preInter{A}{\delta-1}}\\
    \preInter{\mu X.A}{\delta} 					& = & \preInter{A[\mu X.A/X]}{\delta}
  \end{eqnarray*}
  The interpretation of a type with respect to a set $\STa$ is just defined
  as the bi-orthogonal of the pre-interpretation:
    $$\Inter{A}{\delta}{\STa} = \preInter{A}{\delta}^{\bot_{\STa}\bot_{\STa}}$$
      
\begin{remark}
Because of the presence of type fixpoints and regions,
there are several circularities that could appear
in the definition of $\preInter{A}{\delta}$. Yet, 
the interpretation is well defined for the following reasons:
	\begin{itemize}	
		\item The type fixpoints $\mu X.A$ we consider are such that 
		every occurrence of $X$ in $A$ is \emph{guarded} by a modality
		$!$ or $\S$. But these modalities make the index of the interpretation
		decrease by one.
		Hence, $\preInter{\mu X.A}{\delta+1}$ is
		well defined as soon as $\preInter{\mu X.A}{\delta}$ is. 
		\item To define $\preInter{A}{\delta+1}$, we need $\interpR{R}{\delta+1}$ 
		to be already defined. But here again, in $R$ each type is guarded
		by a modality $\S$. This implies that to define $\interpR{R}{\delta+1}$, we
		only need to know each $\preInter{A_i}{\delta}$.
	\end{itemize}
\end{remark}

An important point is that 
the interpretation of a formula $A$ with respect to a region context $R$ and
to an index $\delta \in \N$ is a $\interpR{R}{\delta}$-reducibility candidate (it
will be used to prove bounded-time termination). 

 \begin{proposition}
\label{prop-redcan}
   For all $\delta \in \N$ we have $\Inter{A}{\delta}{\interpR{R}{\delta}} \in \CR_{\interpR{R}{\delta}}$.
 \end{proposition}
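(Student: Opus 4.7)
The plan is to verify, for the set $X := \Inter{A}{\delta}{\interpR{R}{\delta}} = \preInter{A}{\delta}^{\botR\botR}$ (where I abbreviate $\botR := \bot_{\interpR{R}{\delta}}$), the three defining conditions of a $\interpR{R}{\delta}$-reducibility candidate: that $X$ is a $\interpR{R}{\delta}$-behavior, that $(\dai,\z) \in X$, and that $X \subseteq \{(\empStack, x\mapsto x)\}^{\botR}$. The whole argument should proceed by induction on $\delta$ (and inside, on the size of $A$ measured modulo guarded positions), reusing the orthogonality machinery of Lemma \ref{lem-ortho}.

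For the behavior condition, I would simply observe that $X^{\botR\botR} = \preInter{A}{\delta}^{\botR\botR\botR\botR} = \preInter{A}{\delta}^{\botR\botR} = X$ by Lemma \ref{lem-ortho}(3), so this step is free. For the second condition, I would argue that $(\dai,\z)$ lies in $Y^{\botR\botR}$ for \emph{any} set $Y$ of weighted terms, because the constant $\dai$ has no reduction rule: given any weighted stack $(E,e)$ and any $(S,s)\in\interpR{R}{\delta}$, the configuration $\conf{\dai,E,S}$ terminates in $0$ steps, and $0 \leq \norm{e(\z+s)}$, so $(\dai,\z)\botR (E,e)$ trivially. In particular $(\dai,\z)\in X$.

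The third condition is where the small amount of inductive content lives. By Lemma \ref{lem-ortho}(1), it suffices to prove $(\empStack, x\mapsto x) \in \preInter{A}{\delta}^{\botR}$, i.e.\ that every $(V,v)\in\preInter{A}{\delta}$ satisfies $(\conf{V,\empStack,S}, v+s)\in\Bot$ for all $(S,s)\in\interpR{R}{\delta}$. Unfolding the definition of $\preInter{A}{\delta}$ case by case, the first component is always one of $\dai$, $\integer{n}$, $\unitv$, $r$, $\lambda x.M$, $\S V'$, or $\oc V'$, each of which is either the inert constant $\dai$ or a closed syntactic value. For none of them does the abstract machine fire any rule when the stack is empty, so $\conf{V,\empStack,S}$ terminates in $0 \leq \norm{v+s}$ steps. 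The $\mu X.A$ case reduces to the unfolded type $A[\mu X.A/X]$, which by the guardedness hypothesis only reintroduces $\mu X.A$ under a modality, so the appeal to the induction hypothesis is well-founded (the modalities force the depth index to strictly decrease in any recursive invocation).

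The only mildly delicate point is to phrase the induction so that the circularity noted in the remark preceding the proposition is handled cleanly: one must do the outer induction on $\delta$ with, simultaneously, an inner induction on type size that traverses $\mu X.A$ into $A[\mu X.A/X]$, which is justified because in $X$ only appears under $\oc$ or $\S$ and these constructs step the interpretation back from $\delta$ to $\delta-1$. Apart from this bookkeeping, the argument is essentially a direct inspection of the definitions.
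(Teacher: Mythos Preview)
Your argument is correct. The paper does not spell out a proof; it only remarks that ``it is to prove the inductive case of the arrow $\multimap$ that the use of the constant $\dai$ is mandatory'' and that $\dai$ ``is used in the proof to reduce \emph{under the} $\lambda$s.'' This suggests the authors had in mind an induction on the structure of $A$ in which, for $A\multimap B$, one feeds $(\dai,\z)$ as a dummy argument into an element $(\lambda x.M,p)$ of the pre-interpretation and invokes the induction hypothesis on $B$ for the body $M[\dai/x]$.

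Your route is more elementary. You dispense with any genuine type induction for conditions~2 and~3: condition~2 holds because $\dai$ is inert (no reduction rule fires with $\dai$ in head position, regardless of the stack), so $(\dai,\z)$ lands in \emph{every} biorthogonal; condition~3 holds because every element of $\preInter{A}{\delta}$ is either $\dai$ or a syntactic value, hence stuck on the empty stack, so $(\empStack,x\mapsto x)$ is orthogonal to the whole pre-interpretation and Lemma~\ref{lem-ortho}(1) finishes. The only place where the well-founded double induction on $(\delta,\text{size})$ is actually exercised is in unfolding $\mu X.A$, and there it merely re-invokes the well-definedness of the pre-interpretation already established in the paper. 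What the paper's hinted approach buys is perhaps a closer parallel with classical Tait/Girard candidate proofs; what your approach buys is that the three clauses become almost one-liners once one notices that pre-interpretations consist of values and that $\dai$ is universally stuck.
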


This is to prove the inductive case of the arrow $\multimap$ that the use of the constant $\dai$ is mandatory.
Indeed it is used in the proof to reduce \emph{under the} $\lambda$s.
 
 \subsection{Adequacy and bounded-time termination}

 \begin{table}[h]
   $$
   \begin{array}{c}
     \inference[\textsf{v}]
     {}
     {\vdash^{\delta} x : \z}
     \qquad
     \inference[\textsf{r}]
     {}
     {\vdash^{\delta} r : \z}
     \qquad
      \inference[\textsf{u}]
      {}
      {\vdash^{\delta} \unitv :\z}
      \\\\
      \inference[\textsf{int}]
      	{}
      	{\vdash^\delta \integer{n} : \z}
      \qquad
      \inference[\textsf{arith}]
      {\vdash^\delta V_1 : \Val{V_1}\quad\quad \vdash^\delta V_2 : \Val{V_2}}
      {\vdash^\delta V_1 \star V_2 : \Val{V_1} + \Val{V_2} }
      \qquad
     \inference[\textsf{get}]
			{}
			{\vdash^\delta \get{r}: \Mfive}\qquad
      \\\\
		 \inference[\textsf{set}]
			{\vdash^{\delta}  V : \Val{V}}
			{\vdash^{\delta+1} \st{r}{\S V}: \S \Val{V} + \unit }
		  \qquad
     \inference[\textsf{fold}]
			{\vdash^\delta M : \Val{M}}
			{\vdash^\delta M: \Val{M}}\qquad
		 \inference[\textsf{unfold}]
			{\vdash^{\delta}  M : \Val{M}}
			{\vdash^{\delta}  M: \Val{M} }
			
      \\
      \\
      \inference[\textsf{c}]
      {x:\ubang,y:\ubang\vdash^\delta M : \Val{M}}
      {z:\ubang\vdash^\delta \sub{M}{z/x,z/y} : \Val{M} + \unit}
      \qquad
      \inference[\textsf{w}]
      {\vdash^\delta M : \Val{M}}
      {x:\delta\vdash^\delta M : \Val{M} }
      \\\\
      \inference[\textsf{lam}]
      {\vdash^\delta M : \Val{M}}
      {\vdash^\delta \lambda x.M:  \Val{M}}
      \qquad
      \inference[\textsf{app}]
      {\vdash^\delta M_1 :\Val{M_1} & \vdash^\delta M_2:\Val{M_2}}
      {\vdash^\delta M_1M_2 : \Val{M_1}+\Val{M_2} + \Mthree}
      \\\\
      \inference[\textsf{$\S$-prom}]
	      {\vdash^{\delta} M : \Val{M}}
	      {\vdash^{\delta+1} \S M:  \S \Val{M} + \Mfour}
	      \qquad
      \inference[\textsf{$\S$-elim}]
	      {\vdash^{\delta} V : \Val{V}& \Gamma\vdash^\delta M:\Val{M}}
	      {\vdash^\delta \letp{x}{V}{M} : \Val{M}+ \Val{V}+\Mthree}
	    \\\\
      \inference[\textsf{$\oc$-prom}]
      {\vdash^{\delta} M : \Val{M}}
      {\vdash^{\delta+1} \oc M:  F(\Val{M})}
      \qquad
      \inference[\textsf{$\oc$-elim}]
      {\vdash^{\delta} V : \Val{V}& \Gamma\vdash^\delta M:\Val{M}}
      {\vdash^\delta \letb{x}{V}{M} : \Val{M}+ \Val{V}+ \Mthree}
   \end{array}
   $$
   
   \caption{Inferring a bound from a $\LALrefmu$ typing judgment}
   \label{tab-p}
 \end{table}

 We now prove the soundness of our model with respect to $\LALrefmu$
 and as a corollary the bounded-time termination theorem.

 In Table~\ref{tab-p} is described how to infer an element of $\M$
 from a $\LALrefmu$ typing judgment: the notation $\Val{M}$ corresponds
 to the element of $\M$ already inferred from the typing judgment
 of $\Val{M}$, and each rule corresponds to the way $\Val{M}$ is
 built.

 \begin{definition}
 We use the notations $\Vlist{V}$, $\Vlist{p}$ and $\Vlist{y}$ to denote 
 respectively a list of values $[V_1, \dots, V_n]$, a list $[p_1, \dots, p_n]$  
 of elements of $\M$ and a list of variables $[y_1,\dots, y_n]$. 
 If $M$ is a term, we denote by $M[\Vlist{V}/\Vlist{y}]$ the 
 term $M[V_1/y_1,\dots, V_n/y_n]$. 
 If $\Vlist{p}$ is a list of elements of $\M$ and $\dagger \in\{\oc,\S\}$, we denote
 by $\dagger \Vlist{p}$ the list $[\dagger p_1, \dots,\dagger p_n]$. 
 We also define $\sum \Vlist{p}$ to be the sum $\sum_{1\leq i \leq n} p_i$. 
 \end{definition}

 If $A$ is a type then we define $\lambda A$ as $A$ itself. 
 Suppose $\Gamma = x_1 : (e_1, A_1),\dots, x_n : (e_n, A_n)$. 
 Then the notation $(\Vlist{V}, \Vlist{p})\Vdash^\delta \Gamma$ stands for
 $(W_i, p_i) \in \preInter{e_i A_i}{\delta}$ for $1\leq i\leq n$
 with $W_i = V_i$ if $e_i = \lambda$ and $W_i = \dagger V_i$ if $e_i = \dagger$. 
 
 \begin{example}
   If we have $(\Vlist{V}, \Vlist{p}) \Vdash^\delta (x_1 : (\lambda, A_1), x_2 : (\S, A_2), x_3 : (\oc, A_3))$
   then $\Vlist{V} = [V_1, V_2, M_3]$ and $\Vlist{p} = [p_1, \S p_2, \oc p_3]$ such
   that $(V_1,p_1) \in \preInter{A_1}{\delta}$, 
   $(\S V_2,\S p_2) \in \preInter{\S A_2}{\delta}$ and $(\oc V_3,\oc p_3)\in \preInter{\oc A_3}{\delta}$.
 \end{example}
 
 \begin{theorem}[Adequacy]\label{th:adequacy}
   Suppose that $R;\Gamma \vdash^\delta M : C$. 
   Let $(\Vlist{V},\Vlist{p}) \Vdash^\delta \Gamma$, 
   Then, for any $\STa$ such that $\interpR{R}{\delta} \Stincl \STa$,
$$(\sub{M}{\Vlist{V}/\Vlist{x}},  \Val{M} +  \sum \Vlist{p}) \in \Inter{C}{\delta}{\STa}$$
   Moreover, if $M$ is a value, then we have $(\sub{M}{\Vlist{V}/\Vlist{x}},  \Val{M} +  \sum \Vlist{p}) \in \overline{\preInter{C}{\delta}}$.
  \end{theorem}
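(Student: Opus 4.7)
The plan is to proceed by induction on the derivation of $R;\Gamma \vdash^\delta M : C$, establishing both clauses simultaneously. For each rule whose conclusion types a value, I prove the stronger ``moreover'' clause directly, then derive the main clause using the inclusion $\overline{\preInter{C}{\delta}} \subseteq \preInter{C}{\delta}^{\botR\botR} = \Inter{C}{\delta}{\STa}$ recorded in the Remark after the $\leq$-closure. For non-value conclusions I unfold biorthogonality: given any $(E,e) \in \preInter{C}{\delta}^{\botR}$ and $(S,s) \in \STa$, I check that $\conf{M[\Vlist{V}/\Vlist{x}], E, S}$ terminates in at most $\norm{e(\Val{M} + \sum \Vlist{p} + s)}$ steps, absorbing each abstract-machine step via $\reduc$-saturation at the cost of a $\unit$ added to the configuration's weight.

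The routine cases are the purely functional ones. For \textsf{lam}, the IH places the abstracted term directly in $\preInter{A \multimap B}{\delta}$. For \textsf{app}, I walk along the reduction chain $\conf{M_1 M_2, E, S} \reduc \conf{M_2, M_1 \odot E, S} \reduc^{*} \conf{V_2, M_1 \odot E, S'} \reduc \conf{M_1, V_2 \cdot E, S'}$ and compose orthogonality: once $V_2 \in \preInter{A}{\delta}$ the stack $V_2 \cdot E$ lies in $\preInter{A \multimap B}{\delta}^{\botR}$, and $M_1 \odot E$ lies in $\preInter{A}{\delta}^{\botR}$; three saturation steps account for the $\Mthree$ offset of Table~\ref{tab-p}. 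The rules \textsf{c}, \textsf{w} and \textsf{un/fold} follow respectively from $!p + !p = !p + \unit$, monotonicity of $\leq$, and the equation $\preInter{\mu X.A}{\delta} = \preInter{A[\mu X.A/X]}{\delta}$. For $\dagger$-\textsf{elim}, the value clause of the IH applied to the argument forces $V[\Vlist{V}/\Vlist{y}]$ to have the shape $\dagger V'$ with $V'$ realising $A$ at depth $\delta-1$, after which substitution into the body and one $\reduc$-saturation step yield the desired bound. The modal promotion rules are handled through the inequalities $!(p+q) \leq F(p) + !q$ (Property~\ref{prop:functoriality}) and $\S(p+q) \leq \S p + \S q$ (Property~\ref{prop:monoid}).

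The hard part is the effectful rules \textsf{get} and \textsf{set}, which is precisely what the \Stincl machinery is designed for. For \textsf{get}, given $(S,s) \in \STa$, I use $\interpR{R}{\delta} \Stincl \STa$ to split $S = S^\delta \uplus S''$ with $(S^\delta, s') \in \interpR{R}{\delta}$; inside $S^\delta$ the region $r$ appears at its exact depth (by typing rule \textsf{r}) and carries values $\S V_j$ with weights $\S q_j$ and $(V_j, q_j) \in \overline{\preInter{B}{\delta-1}}$, according to the definition of $\interpR{R}{=\delta}$. Removing one such assignment yields a diminished store still in $\interpR{R}{\delta}$, and \Stincl reassembles it with $S''$ into an element of $\STa$; orthogonality against the ejected value together with a single saturation step close the case, with the constant $\Mfive$ absorbing the overhead. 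The \textsf{set} case is dual: the value being stored (provided at depth $\delta-1$ under $\S$ by the premise) is inserted into the $\delta$-slice of the store, the reconstructed store is checked to lie in $\interpR{R}{\delta}$, and the result follows via \Stincl once more. Matching these $\M$-inequalities against the precise constants of Table~\ref{tab-p} is the main bookkeeping obstacle of the whole proof.
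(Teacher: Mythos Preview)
Your treatment of the $\msec$-promotion rule is where the argument breaks down. You list it among the routine cases, ``handled through the inequality $\S(p+q)\leq\S p+\S q$'', but unlike $\oc$-promotion (whose premise types a \emph{value}, so the value clause of the induction hypothesis lands you directly in $\overline{\preInter{B}{\delta-1}}$), the $\msec$-promotion rule types an arbitrary term $M$. The induction hypothesis therefore only places $M'$ in the biorthogonal $\Inter{A}{\delta-1}{\STa'}$, for every $\STa'$ with $\interpR{R}{\delta-1}\Stincl\STa'$, and you must produce $(\S M',\,\S\Val{M}+\Mfour+\ldots)\in\Inter{\S A}{\delta}{\STa}$ for every $\STa$ with $\interpR{R}{\delta}\Stincl\STa$: both the index and the ambient store set have shifted. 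Unfolding biorthogonality as you propose, after the step $\conf{\S M,E,S}\reduc\conf{M,\S\cdot E,S}$ you need $(\S\cdot E,\,x\mapsto e(\S x+\ldots))\in\preInter{A}{\delta-1}^{\bot_{\STa'}}$ for a suitable $\STa'$ manufactured from $\STa$. This is exactly the content of the paper's Promotion Lemma (Lemma~\ref{lm:promotion}), and it is where the $\Stincl$ machinery is really exercised: one decomposes each $(S,s)\in\STa$ along depth via Remark~\ref{rmk:store-decompose}, strips the outer $\S$ from the weight of the low-depth part, and checks that the resulting set still dominates $\interpR{R}{\delta-1}$. The monoidality inequality you cite is used \emph{inside} that argument, but is nowhere near sufficient on its own. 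Your sentence ``the hard part is the effectful rules \textsf{get} and \textsf{set}, which is precisely what the $\Stincl$ machinery is designed for'' misidentifies the crux: the depth-shifting burden is carried primarily by $\msec$-promotion.

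By contrast, your handling of $\dagger$-\textsf{elim} and of $\mathsf{get}$ is actually cleaner than the paper's. Since the $\dagger$-\textsf{elim} premise types a value, the value clause of the hypothesis already forces the shape $\dagger V'$, so the paper's detour through the Commutation Lemma (Lemma~\ref{lm:commutation}) is avoidable; and for $\mathsf{get}$ you bypass the paper's second appeal to Promotion by observing directly that the ejected $(\S V_j,\S q_j)$ lies in $\overline{\preInter{\S B}{\delta}}$. Those simplifications are fine, but they do not compensate for the missing promotion argument.
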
 

 Before giving the proof of Theorem \ref{th:adequacy}, let's begin 
 by an essential remark about the definition of the interpretation
 of a region context $R$. 

   \begin{remark}\label{rmk:store-decompose}
  Let $\interpR{R}{\delta+1} \Stincl R'$. If $(S,s)\in R'$,
  it can be uniquely written $(\StL{S}{\delta} \uplus {S}^{=\delta+1} \uplus S_r, \S s_\delta + s_{\delta+1} + s_r)$
  where $(\StL{S}{\delta}, s_\delta)\in\interpR{R}{\delta}$, $({S}^{\delta+1}, s_{\delta+1})\in \interpR{R}{=\delta+1}$
  and $dom(S_r) \subseteq \setp{r_i}{\delta_i > \delta+1}$. 
  Hence, if we form the set $R'_\delta = \setp{(S,s)}{(S, \S s+s_{\delta+1} + s_r) \in R'\wedge 
  (\StL{S}{\delta},s)\in \interpR{R}{\delta}}$
  is such that $\interpR{R}{\delta} \Stincl R'_\delta$. 
  \end{remark}

  We also need two intermediate lemmas. The first one is 
  about the promotion rule for $\S$. 

  \begin{lemma}[Promotion]\label{lm:promotion}
  Suppose that for any $\STa$ such that $\interpR{R}{\delta}\Stincl \STa$,
  $(M,m)\in\Inter{A}{\delta}{\STa}$ holds.
  Then for any $\STa$ such that $\interpR{R}{\delta+1} \Stincl \STa$,
  we have $(\S M, \S m+ \Mfour)\in\Inter{\S A}{\delta+1}{\STa}$.
  \end{lemma}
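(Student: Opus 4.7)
The plan is to fix $\STa$ with $\interpR{R}{\delta+1}\Stincl\STa$ and any $(E,e)\in\preInter{\S A}{\delta+1}^{\botr{\STa}}$, and to prove that $(\S M,\S m+\Mfour)\botr{\STa}(E,e)$. Unfolding the definition of orthogonality, this amounts to showing that for every $(S,s)\in\STa$ the configuration $\conf{\S M,E,S}$ terminates in at most $\|e(\S m+\Mfour+s)\|$ machine steps.

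The driving operational observation is the one-step reduction $\conf{\S M,E,S}\reduc\conf{M,\S\cdot E,S}$, which brings us back to a configuration at depth $\delta$ where the hypothesis on $M$ applies. I would invoke Remark~\ref{rmk:store-decompose} to decompose $(S,s)$ uniquely as $S=\StL{S}{\delta}\uplus S^{=\delta+1}\uplus S_r$ with $s=\S s_\delta+s_{\delta+1}+s_r$, and consider the store set $\STa_\delta$ supplied by that Remark, which satisfies $\interpR{R}{\delta}\Stincl\STa_\delta$ and contains $(\StL{S}{\delta},s_\delta)$. The hypothesis then supplies $(M,m)\in\Inter{A}{\delta}{\STa_\delta}$.

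The core of the proof is to exhibit a weighted stack $(\S\cdot E,e')\in\preInter{A}{\delta}^{\botr{\STa_\delta}}$ for a suitable $\M$-context $e'$ built from $e$. For any value $(V,v)\in\preInter{A}{\delta}$ and any $(S',s')\in\STa_\delta$, the reduction $\conf{V,\S\cdot E,S'}\reduc\conf{\S V,E,S'}$ combined with $(\S V,\S v)\in\preInter{\S A}{\delta+1}$ and the extension of $(S',s')$ to an element $(S',\S s'+s_{\delta+1}+s_r)$ of $\STa$ lets me invoke the orthogonality of $(E,e)$, yielding a bound of the shape $\|e(\S v+\S s'+s_{\delta+1}+s_r)\|+\unit$, where the extra $\unit$ is absorbed via the $\reduc$-saturation of the pole. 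Using $\S v+\S s'\leq\S(v+s')+\Mtwo$ from Property~\ref{prop:monoid} together with $\leq$-saturation, this rearranges into $\|e'(v+s')\|$ for $e'$ roughly of the form $p\mapsto e(\S p+c)$ with $c$ absorbing $\Mtwo+\unit$ and the higher-depth weights.

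Finally, applying the biorthogonality $(M,m)\botr{\STa_\delta}(\S\cdot E,e')$ to the element $(\StL{S}{\delta},s_\delta)\in\STa_\delta$ together with the store $S$ bounds $\conf{M,\S\cdot E,S}$, and adding $\unit$ for the outer reduction yields the target bound on $\conf{\S M,E,S}$, modulo the final arithmetic check $\|e'(m+s_\delta)\|+\unit\leq\|e(\S m+\Mfour+s)\|$. The main obstacle is exactly this bookkeeping: making $e'$ a genuine store-independent $\M$-context while still absorbing the higher-depth weights $s_{\delta+1}$ and $s_r$ attached to the various elements of $\STa_\delta$. This is precisely what the tailored definition of $\STa_\delta$ in Remark~\ref{rmk:store-decompose} is designed to support, and the budget $\Mfour$ decomposes naturally as $\unit+\unit+\Mtwo$ — one step for the outer reduction, one for the inner $V\mapsto\S V$ step, and $\Mtwo$ for the commutation of $\S$ with addition.
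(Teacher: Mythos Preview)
Your approach is essentially the paper's own: fix $(E,e)$ and a store, decompose the weight via Remark~\ref{rmk:store-decompose}, build the auxiliary store set at level $\delta$, show $(\S\cdot E,\,\lambda x.e(\S x+\Mthree+s_0))$ lies in $\preInter{A}{\delta}^{\bot}$ using the one-step reduction $V\mapsto\S V$ and the inequality $\S v+\S s'\leq\S(v+s')+\Mtwo$, then conclude by orthogonality and one more anti-reduction step; the budget $\Mfour=\unit+\unit+\Mtwo$ is accounted for exactly as you say.

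One slip to fix in your last paragraph: the store element you feed into the final orthogonality application should be $(S,s_\delta)\in\STa_\delta$, not $(\StL{S}{\delta},s_\delta)$. Your own middle paragraph already treats elements of $\STa_\delta$ as carrying the \emph{full} store with only the level-$\delta$ weight (you write ``the extension of $(S',s')$ to an element $(S',\S s'+s_{\delta+1}+s_r)$ of $\STa$''), and indeed by the definition in Remark~\ref{rmk:store-decompose} one has $(S,s_\delta)\in\STa_\delta$ since $(S,\S s_\delta+s_{\delta+1}+s_r)=(S,s)\in\STa$; applying orthogonality to this element directly bounds $\conf{M,\S\cdot E,S}$ as required, whereas $(\StL{S}{\delta},s_\delta)$ would only bound the configuration with the truncated store.
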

  \begin{proof}
  Take $\STa$ such that $\interpR{R}{\delta+1}\Stincl \STa$,
  $(E,e)\in {(\Inter{\S A}{\delta+1}{\STa})}^{\bot_{\STa}}$ and
  $(S',s')\in \STa$. We have $s'= \S s_\delta + s_0$ with $(\StL{S'}{\delta}, s_\delta)\in\interpR{R}{\delta}$
  (see Remark \ref{rmk:store-decompose}). 
  We want to show that $(\conf{!M,E,S'}, e(\S m+\Mfour+s')) \in \Bot$.
  By anti-reduction and $\leq$-saturation (by monoidality of $\S$), it suffices to show  
  $(\conf{M,\S.E,S'}, e(\S (m + s_\delta)+s_0+\Mthree)) \in \Bot$.
  We pose $\STa' = \setp{(S,s)}{(S,\S s+s_0)\in  \STa\wedge (\StL{S}{\delta}, s)\in \interpR{R}{\delta}}$. 
  Since $\interpR{R}{\delta} \Stincl \STa'$, and $(S', s_\delta) \in \STa'$, it is sufficient to prove
    $(\S.E, \lambda x.e(\S x+\Mthree +s_0)) \in {(\Inter{A}{\delta}{\STa'})}^{\bot_{\STa'}} = \preInter{A}{\delta}^{\bot_{\STa'}}$.
  So let $(V,v) \in \preInter{A}{\delta}$ and $(S,s)\in \STa'$. We know that 
  $(\S V,\S v) \in \preInter{\S A}{\delta+1}$. Moreover, $(S,\S s+s_0)\in \STa$.
  Since $(E,e)\in\preInter{\S A}{\delta+1}^{\bot_{\STa}}$,
  we have $(\conf{\S V,E,S}, e(\S v+\S s+s_0))\in\Bot$.
  So by $\reduc$-saturation, $(\conf{V, \S.E, S}, e(\S v+\unit+\S s+s_0))\in\Bot$.
  Hence, by distributivity of $\S$, we obtain
  $(\conf{V, \S.e, S}, e(\S (v+s) + \Mtwo+ \unit + s_0)) \in \Bot$. 
  \end{proof}
	
	The proof of this last lemma is very important, since it justifies many 
  design choices of our model.
	\begin{itemize}
	  \item Its proof crucially relies on the fact that in the definition 
	of the region context interpretation $\interpR{R}{\delta}$, 
	each value is guarded by a modality $\S$ and not by a modality
	$!$. Indeed, it requires the monoidality property, which is 
	true for $\S$ but not for $\oc$:
	$\forall p,q\in\M, \S(p+q) \leq \S p+ \S q$.
	  \item It also relies on the fact that we can consider 
	  any set of store $\STa$ such that $\interpR{R}{\delta} \Stincl \STa$,
	  which is also built-in in our interpretation of the 
	  linear arrow $\multimap$. 
  \item It also justifies the fact that we need to consider $\M$-contexts,
  which are not only of the form $x\mapsto x + p$, since in this proof we use 
  functions of the form $x\mapsto \S x + p$. 
	\end{itemize}

  We also prove a commutation  lemma, which is needed when one wants
  to use biorthogonality in presence of call-by-value and modalities like $!$
  or $\S$. 

  \begin{lemma}[Commutation]\label{lm:commutation}
  Suppose that
  $\forall (V,p) \in X, (M[V/x], q+p) \in Y$.
  Then for every $\STa$, we have 
  $\forall (V,p) \in X^{\bot_\STa\bot_\STa}, (M[V/x], q+p+\Mtwo)\in Y^{\bot_\STa\bot_\STa}$.
  \end{lemma}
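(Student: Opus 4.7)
My plan is a standard biorthogonality argument combined with the call-by-value trick of encoding a substitution as an application. Fix $\STa$ and $(V,p)\in X^{\bot_\STa\bot_\STa}$. Take any $(E,e)\in Y^{\bot_\STa}$ and $(S,s)\in\STa$; the goal is $(\conf{M[V/x],E,S},e(q+p+\Mtwo+s))\in\Bot$. I recast this as a test of $(V,p)$ against the continuation $E' = (\lambda x.M)\odot E$, exploiting the fact that $\conf{V,E',S}\reduc\conf{\lambda x.M,V\cdot E,S}\reduc\conf{M[V/x],E,S}$ in exactly two reduction steps of the abstract machine.

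I pair $E'$ with the $\M$-context $e'(x) = e(q+x)+\Mtwo$, a valid composition of $\lambda z.z+q$, $e$, and $\lambda y.y+\Mtwo$. To show $(E',e')\in X^{\bot_\STa}$, I pick any $(V',p')\in X$ and $(S'',s'')\in\STa$: the hypothesis gives $(M[V'/x],q+p')\in Y$, so orthogonality with $(E,e)$ yields $(\conf{M[V'/x],E,S''},e(q+p'+s''))\in\Bot$. Two applications of $\reduc$-saturation along the two reduction steps from $\conf{V',E',S''}$ down to $\conf{M[V'/x],E,S''}$ then deliver $(\conf{V',E',S''},e(q+p'+s'')+\Mtwo)=(\conf{V',E',S''},e'(p'+s''))\in\Bot$ as required.

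Now biorthogonality of $(V,p)$ with $(E',e')$ gives $(\conf{V,E',S},e(q+p+s)+\Mtwo)\in\Bot$, and since the pole is preserved forward along reductions (fewer remaining steps), this entails $(\conf{M[V/x],E,S},e(q+p+s)+\Mtwo)\in\Bot$. The main obstacle is the final book-keeping step: one must transport the $+\Mtwo$ summand from outside the $\M$-context $e$ to inside it, in order to reach the target weight $e(q+p+\Mtwo+s)$. This is handled by $\leq$-saturation together with the monoidal inequalities of Property \ref{prop:monoid}—in particular the sub-additivity of $\S$ and the analogous bound for $\oc$ via $F$ from Property \ref{prop:functoriality}—and the monotonicity of $\M$-contexts, very much in the style of the closing computation of Lemma \ref{lm:promotion}.
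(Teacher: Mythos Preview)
Your argument is the paper's argument: form the continuation $E'=(\lambda x.M)\odot E$, show $(E',e')\in X^{\bot_\STa}$, pass to $X^{\bot_\STa\bot_\STa\bot_\STa}$, and close by forward reduction. The only difference is where you park the constant $\Mtwo$: you take $e'(x)=e(q+x)+\Mtwo$, the paper takes $e'(x)=e(q+x+\Mtwo)$. With your placement the check that $(E',e')\in X^{\bot_\STa}$ is literally two uses of $\reduc$-saturation with no $\leq$-comparison needed, so your second paragraph is in fact tighter than the paper's corresponding passage.

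The weak point is the closing ``transport''. To pass from $e(q+p+s)+\Mtwo$ to the required $e(q+p+\Mtwo+s)$ you need $e(a)+\Mtwo\leq e(a+\Mtwo)$ for an \emph{arbitrary} $\M$-context $e$, and Properties~\ref{prop:monoid} and~\ref{prop:functoriality} do not deliver this. For $a=(n,m,f)$ and $g(x)=x^3f(x^3)$ one has $\oc a+\Mtwo=(3,\,n+m,\,g)$ while $\oc(a+\Mtwo)=(1,\,n+m+2,\,g)$; adding any $r=(n_r,m_r,f_r)$ with $m_r\leq n+m$ makes the second and third components of the two sums coincide, but the first components are $3+n_r>1+n_r$, whence $\norm{\oc a+\Mtwo+r}>\norm{\oc(a+\Mtwo)+r}$ and so $\oc a+\Mtwo\not\leq\oc(a+\Mtwo)$. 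The analogy with Lemma~\ref{lm:promotion} does not rescue this step either: there a single $\S$ is commuted outward, not an additive constant pushed inward through an unknown composition of $\oc$, $\S$ and shifts. The paper's own chain of implications makes the same silent jump (it slides the $\Mtwo$ inside $e$ immediately after the two anti-reduction steps, without comment), so this is a shared soft spot rather than a defect specific to your write-up; but the justification you offer for it is not correct.
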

  \begin{proof}
  The proof mainly uses the property of $\reduc$-saturation 
  and of $\reduc$-closure. Here are a series of implications.
  \begin{eqnarray*}
   & & \forall (V,p) \in X, (M[V/x], q+p) \in Y\\
   & \Rightarrow & \forall (V,p)\in X, \forall (E,e)\in Y^{\bot_\STa}, \forall (S,s)\in \STa,
   (\conf{M[V/x], E, S}, e(q+p+s))\in\Bot\\
   & \Rightarrow & \forall (V,p)\in X, \forall (E,e)\in Y^{\bot_\STa}, \forall (S,s)\in \STa,\\
   & & 
   (\conf{V, (\lambda x.M)\odot E, S}, e(q+p+s+\Mtwo))\in\Bot\\
   & \Rightarrow & \forall (E,e)\in Y^{\bot_\STa}, ((\lambda x.M)\odot E, \lambda x.e(q+x+s+\Mtwo))\in X^{\bot_\STa}\\
   & \Rightarrow & \forall (E,e)\in Y^{\bot_\STa}, ((\lambda x.M)\odot E, \lambda x.e(q+x+s+\Mtwo))\in X^{\bot_\STa\bot_\STa\bot_\STa}\\
   & \Rightarrow & \forall (V,p)\in X^{\bot_\STa\bot_\STa}, \forall (E,e)\in Y^{\bot_\STa}, \forall (S,s)\in \STa,
   \\
    & & (\conf{V, (\lambda x.M)\odot E, S}, e(q+p+s+\Mtwo))\in\Bot\\
   & \Rightarrow & \forall (V,p)\in X^{\bot_\STa\bot_\STa}, \forall (E,e)\in Y^{\bot_\STa}, \forall (S,s)\in \STa,\\
  & &  (\conf{M[V/x], E, S}, e(q+p+s+\Mtwo))\in\Bot\\
   & \Rightarrow & \forall (V,p)\in X^{\bot_\STa\bot_\STa}, (M[V/x],q+p+\Mtwo)\in Y^{\bot_\STa\bot_\STa}\\
   \end{eqnarray*}
  \end{proof}

  We can now prove Theorem \ref{th:adequacy}.

   \begin{proof} 
   This theorem is proved by induction on the typing judgment.
   When we consider a value, we only prove the second statement, since it implies the 
   first (by the properties of biorthogonality). 
   \begin{itemize}    
   \item[$\mathsf{(v)}$] This case is immediate by substitution.  
   \item[$\mathsf{(r)}$, $\mathsf{(u)}$, $\mathsf{(int)}$] and $\mathsf{(arith)}$ These two cases are trivial, by definition
   of $\preInter{\tertype}{\delta}$ and $\preInter{\regtype{r}{A}}{\delta}$. 
   \item[$\mathsf{(w)}$] This case is just an application of $\leq$-saturation. 
   \item[$\mathsf{(fold)}$, $\mathsf{(unfold)}$] The two fixpoint rules are easy, since
   we have $\preInter{\mu X.A}{\delta} = \preInter{A[\mu X.A/X]}{\delta}$.
	\item[$\mathsf{(lam)}$] If the last rule used is the introduction of $\lambda$:		
	$$\inference
		{R;\Gamma,\hyp{y}{\uaff}{A} \vdash^{\delta} N:B}
		{R;\Gamma \vdash^{\delta} \lambda y.N:A \multimap B}$$
	     We take $(\Vlist{V}, \Vlist{p}) \Vdash^\delta \Gamma$. 
		We denote by $N' = N[\Vlist{V}/\Vlist{x}]$ and
			$p' = \sum \Vlist{p}$. 			
	By induction hypothesis we know
	that for every $\STa$ such that $\interpR{R}{\delta}\Stincl \STa$ and 
	every $(V',v')\in\preInter{A}{\delta}$, 
		$(\sub{N'}{V'/x}, \Val{N'} + p' + v') \in \Inter{B}{\delta}{\STa}$.
	That means exactly that $(\lambda x.N', \Val{\lambda x.N} + p') \in \preInter{A \multimap B}{\delta}$ 
	where $\Val{\lambda x.N} = \Val{N}$.

\item[$\mathsf{(app)}$] Here, to simplify the presentation, we suppose
the contexts are empty (it does not change the argument).
	$$\inference 
	  {R;\vdash^\delta M:A \multimap B & R;\vdash^\delta N:A}
	  {R;\vdash^\delta MN:B}$$
Take $\STa$ such that $\interpR{R}{\delta}\Stincl \STa$. 
By induction hypothesis we have 

	$$
		\begin{array}{l}
		  (M, \Val{M}) \in \Inter{A\multimap B}{\delta}{\STa}\\
      (N,\Val{N})  \in \Inter{A}{\delta}{\STa}
    \end{array}
  $$
Take $(E,e) \in {(\Inter{B}{\delta}{\STa})}^{\bot_{\STa}}$ and $(S,s)\in \STa$. 
We want to show that 
$$(\conf{MN,E,S},e(\Val{MN}+s)) \in \Bot$$
where $\Val{MN} = \Val{M} + \Val{N} + \Mthree$. But 
$\conf{MN,E,S} \reduc \conf{N,M\odot E,S}$. Then, it suffices
to show  
  $(M \odot E,\lambda x. e(\Val{M}+ \Mtwo+x)) \in {\preInter{A}{\delta}}^{\bot_{\STa}}$.
Take $(V_A,v_A) \in \preInter{A}{\delta}$ and $(S',s')\in \STa$. Now we have to prove 
$$(V_A,M\odot E,S' ,e(\Val{M} + v_A + \Mtwo+s')) \in \Bot$$
But $\conf{V_A,M\odot E,S'} \reduc \conf{M,V_A \cdot E,S'}$, so
by $\reduc$-saturation we only have to prove
$(V_A \cdot E,S',\lambda x.e(x+v_A + \unit))\in \preInter{A\multimap B}{\delta}^{\bot_{\STa}}$.
Let $(\lambda x.P,p) \in \preInter{A\multimap B}{\delta}$ and $(S,s)\in \interpR{R}{\delta}$. 
We have $\conf{\lambda x.P,V_A\cdot E,S} \reduc \conf{\sub{P}{V_A/x},E,S}$.
But since $(\sub{P}{V_A/x}, p+v_A) \in \Inter{B}{\delta}{\STa}$, we have 
$$(\conf{\sub{P}{V_A/x},E,S}, e(p+v_A+s)) \in \Bot$$
Hence, we conclude that $(V_A \cdot E, \lambda x.e(x+v_A + \unit)) \in
\preInter{A \multimap B}{\delta}^{\bot_{\STa}}$ by $\reduc$-saturation.

\item[$\mathsf{(st)}$] In this case, we have the following typing rule
$$
\inference
{r:(\delta,\S C) \in R \\ R;\Gamma \vdash^{\delta-1}
 V : C}
{R;\Gamma \vdash^\delta \st{r}{\S V}:\tertype} 
$$
Here, it is safe to consider that $\Gamma = \emptyset$
since $V$ is closed (the case $\Gamma\neq\emptyset$ is recovered
from the case $\Gamma = \emptyset$ by $\leq$-saturation). Let $\STa$ such that 
$\interpR{R}{\delta} \Stincl \STa$, $(E, e) \in \preInter{\tertype}{\delta}^{\bot_{\STa}}$
and $(S,s)\in \STa$. We want to prove
  $$(\conf{\st{r}{\S V}, E, S}, e(\S \Val{V} + \unit + s))\in\Bot$$
But we have $(S, s+ \S \Val{V}) \in \STa$ so 
  $$(\conf{*, E, S\uplus\{r\rightarrow \S V\}}, e(s+\S \Val{V}) \in \Bot$$
Hence by $\reduc$-saturation and monotonicity of $e$, we obtain the result.

\item[$\mathsf{(get)}$] We now justify the $get$ typing rule, which says that
if $(r : (\delta, \S A) \in R$, then 
$$R;\vdash^\delta \get{r}:\S A$$
Let $\STa$ be such that $\interpR{R}{\delta}\Stincl \STa$,
$(E,e) \in \preInter{\S A}{\delta}^{\bot_{\STa}}$ and $(S,s) \in \STa$. 
So consider one possible decomposition $S = S' \uplus \{ r \rightarrow \S V \}$,
then we can decompose $s = s' + \S s_V$ with $(V, s_V) \in \Inter{A}{\delta-1}{\STa'}$ 
for any $\STa'$ such that $\interpR{R}{\delta-1}\Stincl \STa'$.
We then have
  $$\conf{get(r), E, S} \reduc \conf{\S V, E, S'}$$
Moreover, by Lemma \ref{lm:promotion}, $(\S V, \S s_V + \Mfour)\in\Inter{\oc A}{\delta}{\STa}$ and
$(S', s') \in \STa$. Hence,   
$$(\conf{\S V, E, S'}, e(\S s_V + \Mfour + s')) \in \Bot$$ 
Hence, by $\reduc$-saturation $(\conf{get(r), E, S}, e(\Mfive + s))$.
We conclude that $(get(r),\Mfive) \in \Inter{\S A}{\delta}{\STa}$.

   \item[$\mathsf{(c)}$] We want to justify the contraction rule
      $$\inference
					{R;\Gamma,\hyp{y}{\ubang}{A},\hyp{z}{\ubang}{A}\vdash^\delta M:B}
					{R;\Gamma,\hyp{y}{\ubang}{A}\vdash^\delta \sub{M}{y/z}:B}$$			
	     We take $(\Vlist{W},\Vlist{p}) \Vdash^\delta \Gamma$. 
		We denote by $M' = M[\Vlist{W}/\Vlist{x}]$ and
			$p' = \sum \Vlist{p} $. Let $\STa$ such that $\interpR{R}{\delta}\Stincl \STa$.
		We take $(V,v) \in \preInter{A}{\delta - 1}$.
		By induction hypothesis we have 
  		$(\sub{M'}{V/y,V/z},\Val{M} + p' + \oc v + \oc v) \in \Inter{B}{\delta}{\STa}$.
		Since $\norm{\oc p+\oc p} \leq \norm{\oc p +\Mtwo}$ for any $p\in\M$,
		we conclude that 
 			$(\sub{\sub{M'}{y/z}}{V/y},\Val{M} +p'+ \oc v) \in \Inter{R}{B}{\delta}{\STa}$.
 	  The case where $M'$ is a value is similar. 

\item[$\mathsf{(}!-\mathsf{prom)}$]  The $!$ promotion rule is as follows.
       $$\inference
					{R;\hyp{x}{\uaff}{A} \vdash^{\delta-1} V : B}
					{R;\hyp{x}{\ubang}{A}\vdash^{\delta}\oc V :\oc B}$$		 
		Let $(V',p)\in \preInter{A}{\delta-1}$. 
		We know by induction that there is some
		$q$ such that $q \leq \Val{V}+p$ with
	  $(V[V'/x], q)\in\preInter{B}{\delta-1}$. 
	  We then have immediately that 
	  $$(\oc V[V'/x], \oc q)\in \preInter{\oc B}{\delta}$$
	  and so 
	  $$(\oc V[V'/x], \oc (\Val{V}+p))\in\overline{\preInter{\oc B}{\delta}}$$
	  But, since $ \norm{\oc (\Val{V}+p)} \leq\norm{F(\Val{V}) + \oc p}$
		by Property \ref{prop:functoriality}. Hence, by $\leq$-saturation,
		$( (\oc V)[V'/x], F(\Val{M}) + \oc p)\in\overline{\preInter{\oc B}{\delta}}$.

\item[$\mathsf{(}\S-\mathsf{prom)}$]   Here, suppose the the rule is written
       $$\inference
					{R; x_1 : (\uaff, A_1),\dots, x_n : (\uaff, A_n) \vdash^{\delta-1} M : C}
					{R;x_1 : (\S, A_1),\dots, x_n : (\S, A_n) \vdash^\delta \S M : \S C}$$
					
Let's take $(V_i,p_i) \in \preInter{A_i}{\delta-1}$ for $1\leq i \leq n$
and $(W_j,q_j)\in\preInter{B_j}{\delta-1}$ for $1\leq j\leq k$.
We know by induction hypothesis that for any $\STa$ such that $\interpR{R}{\delta-1}
\Stincl \STa$, we have $(M[V_i/x_i,W_j/y_j],\Val{M} + \sum_i p_i+\sum_j q_j)\in\Inter{C}{\delta-1}{\STa}$. 
Take $\STa'$ such that $\interpR{R}{\delta}\Stincl \STa'$. 
Hence, by Lemma \ref{lm:promotion}, 
$$(\S M[V_i/x_i,W_j/y_j], \S (\Val{M}+\sum_i p_i+\sum_j q_j) + \Mfour) \in \Inter{\S C}{\delta}{\STa'}$$
But, by Property \ref{prop:monoid}, we have
$\norm{\S (\Val{M}+\sum_i p_i+\sum_j q_j )} \leq \norm{\S \Val{M} + \sum_i \S p_i +\sum_j \oc q_j}$.
We can conclude 
$$(\S M[V_i/x_i,W_j/y_j], \Val{\S M}+\sum_i \S p_i + \sum_j \oc q_j) \in \Inter{\S C}{\delta}{\STa'}$$

\item[$\mathsf{(}\S-\mathsf{elim)}$]
The rule is as follows (where the variables respectively associated
to the contexts $\Gamma_\lambda, \Delta_\lambda, \Gamma_!,\Delta_!, \Gamma_\S$
and $\Delta_\S$ are noted $\Vlist{x},\Vlist{x'},\Vlist{y},\Vlist{y'},\Vlist{z},
\Vlist{z'}$). 
  $$
  \inference
   {R;\Gamma  \vdash^{\delta} V: \dagger B& 
    R;\Delta,\hyp{x_\S}{\S}{B} \vdash^\delta M:C}
  {R;\Gamma,\Delta \vdash^\delta \letp{x_\S}{V}{M}:C}
  $$
	     We take $(\Vlist{V_\Gamma},\Vlist{p}) \Vdash^\delta \Gamma$ and
	     $(\Vlist{V_\Delta},\Vlist{p'}) \Vdash^\delta \Delta$. 
  We pose
	$$
		\begin{array}{rcl}
		  V' & =&  V[\Vlist{V_\Gamma}/\Vlist{x}]\\
      p_V & = & \sum \Vlist{p}\\
      M' &= & M[\Vlist{V_\Delta}/\Vlist{x'}]\\
      p_M & = & \sum \Vlist{p'}
    \end{array}
  $$
 Let's take $\STa$ such that $\interpR{R}{\delta}\Stincl \STa$.
 Then, by induction hypothesis, we know 
  that if $(\S W, \S p_W) \in \preInter{\S B}{\delta}$, 
  $$(M'[W/x_\S], \Val{M} + p_M + \S p_W) \in \Inter{C}{\delta}{\STa}$$
  By $\reduc$-saturation (by considering a context), 
  we obtain that for any $(\S W, \S p_W) \in \preInter{\S B}{\delta}$
  $$(\letp{x_\S}{\S W}{M'}, \Val{M} + \unit + p_M + \S p_W)\in\Inter{C}{\delta}{\STa}$$
  But we also know by induction hypothesis that $(V', \Val{V}+p_V)\in \Inter{\S B}{\delta}{\STa}$.
  Hence, by Lemma \ref{lm:commutation}, since $\Inter{C}{\delta}{\STa} 
   = \preInter{\S C}{\delta}^{\bot_\STa \bot_\STa}$,    
   we obtain
   $$(\letp{x_\S}{V'}{M'}, \Val{M} + \Val{V} + \Mthree + p_V + p_M) \in \Inter{C}{\delta}{\STa}$$
   
\item[$\mathsf{(}!-\mathsf{elim)}$]  This case is completely similar 
to the previous case, except we have to replace every mention 
of $\S$ by $\oc$. 
   \end{itemize}
 \end{proof}
 
	As a corollary of the adequacy theorem, we obtain the announced bounded-time termination theorem
	for $\LALrefmu$ programs. 
 \begin{proof}[Termination theorem (Theorem \ref{thm:termination})]
   This theorem is proved using adequacy together with 
   Property \ref{prop-redcan}. Indeed, 
   we know that $\conf{M,\empStack,\empStore}$ terminates
   in at most $\norm{\Val{M}}$ steps. But it is easy to see
   that only the promotion rules for $\S$ and $\oc$ make
   the value of $\norm{Val{M}}$ increases significantly: the
   degree of the third component of $\Val{M}$ (which is
   a polynomial) is bounded by a function of the depth
   of $M$. A similar argument is made more precise in \cite{dalagobll},
   for instance.
 \end{proof}


\section{Related Work}
\label{sec:related}
 \partitle{Approximation modality}	In a series of two papers, Nakano introduced
a normalizing intuitionistic type system that features recursive types, which
are guarded by a modality $\bullet$ (the approximation modality). 
Nakano also defines an indexed realizability semantics for this
type system. The modality $\S$ plays
in our work almost the same role as $\bullet$: it makes the
index increase. We claim that when we forget the quantitative part 
of our model, we obtain a model for a language with guarded references, that
can be extended to handle control operators, based
on a fragment of Nakano's type system: the only difference
is that the $\bullet$ modality does not enjoy digging anymore
(in presence of control operators, this principle would 
break normalization). \newline

\partitle{Stratified semantics for light logics} Several semantics for 
the "light" logics have been proposed, beginning with fibered phase models \cite{kanovich2003phase},
a truth-value semantics for \class{LLL}. We can also mention 
stratified coherent spaces \cite{baillot2004stratified}. 
These two models are indexed, like ours, but while the indexing 
is used to achieve completeness with respect to the logic, we use it
to interpret fixpoints and references. \newline

\partitle{Reactive programming} In \cite{krishnaswami2012higher}, Krishnaswami \& al. have proposed a 
type system for a discrete-time reactive programming 
language that bounds the size of the data flow graph produced
by programs. It is based on linear types and a Nakano-style
approximation modality, thus bounding space consumption
and allowing recursive definitions at the same time. They
provide a denotational semantics based on both ultrametric
semantics and length spaces. These latter, 
introduced by Hofmann \cite{hofmann2003linear} constitute
the starting point of the quantitative realizability presented
here.

\section{Research directions}
\label{sec:conclusion}
We see several possible directions we plan to 
explore.\newline

\partitle{Control operators} Since we use a biorthogonality-based model,
it is natural to extend the language with control operators. Adding
the call-cc operator can be done, but it requires
to add a modality type $?$ for \emph{duplicable contexts}. This involves
some technical subtleties in the quantitative part, like the symmetrization
of the notion of $\M$-contexts. Indeed, in our framework, a $\M$-context can be
used to \emph{promote} a weight associated to a term, but with this new $?$
type, a weight associated to a term would need to be able to promote a weight
associated to a stack. \newline

\partitle{Multithreading} In the original work of Amadio and Madet \cite{AmadioMadet2011},
the language features regions but also multithreading. It is possible to 
add it to $\LALrefmu$ but so far, it seems difficult to adapt the quantitative
framework for this extension. It may be possible to adapt the notion
of \emph{saturated store} presented in \cite{amadio2009stratified}, but with
a boundedness requirement on it. We plan to explore this direction in the future.

\bibliographystyle{plain}
\bibliography{biblio}

\end{document}